\documentclass[reprint,nofootinbib,amsmath,amssymb,aps,pra]{revtex4-2}

\usepackage{graphicx}
\usepackage{dcolumn}
\usepackage{bm}
\usepackage[utf8]{inputenc}
\usepackage[T1]{fontenc}

\usepackage{etoolbox}

\usepackage{blindtext}
\usepackage[polish, english]{babel}
\usepackage{makecell}
\usepackage{multirow}
\usepackage{amsfonts}
\usepackage{anyfontsize}
\usepackage{amsmath}
\usepackage{amsthm}
\usepackage{tikz}
\usepackage{extpfeil}
\usepackage{tikz-cd}
\usepackage{physics}
\usepackage{caption}
\usepackage{subcaption}
\usepackage{xcolor}
\usepackage{dsfont}
\usepackage{appendix}
\usepackage[breaklinks=true]{hyperref}
\hypersetup{
    colorlinks=true,
    linkcolor=blue,
    filecolor=blue,
    urlcolor=blue,
    citecolor=blue
}

\usepackage[polish, english]{babel}

\usepackage{graphicx,makecell}%
\usepackage{multirow}%
\usepackage{amsfonts}%
\usepackage{amsthm}%
\usepackage{mathrsfs}%
\usepackage{xcolor}%
\raggedbottom
\usepackage{tikz}
\usepackage{extpfeil}
\usepackage{tikz-cd}
\usepackage{physics}
\usepackage{dsfont}
\usepackage[mathscr]{euscript}
\usepackage{scalerel}
\usepackage{tikz}
\usetikzlibrary{svg.path}
\usepackage{stackengine}
\newcommand\stackrqarrow[2]{%
    \mathrel{\stackunder[2pt]{\stackon[4pt]{$\rightsquigarrow$}{$\scriptscriptstyle#1$}}{%
            $\scriptscriptstyle#2$}}}

\newextarrow{\xbigtoto}{{20}{20}{20}{20}}    {\bigRelbar\bigRelbar{\bigtwoarrowsleft\rightarrow\rightarrow}}
\newextarrow{\xbigto}{{20}{20}}
   {\bigRelbar{\bigtwoarrowsleft\rightarrow}}

\newtheorem{theorem}{Theorem}
\newtheorem{proposition}[theorem]{Proposition}
\newtheorem{definition}{Definition}%
\newtheorem{corollary}[theorem]{Corollary}

\newtheorem{result}[theorem]{Result}
\newtheorem{axiom}{Axiom}

\newcommand*\interior[1]{\mathring{#1}}
\newcommand*\closure[1]{\overline{#1}}

\begin{document}

\preprint{APS/123-QED}

\title{Wigner and friends, a map is not the territory!\\ Contextuality in multi-agent paradoxes}
\author{Sidiney B. Montanhano}
\email{s226010@dac.unicamp.br}
\affiliation{Instituto de Matemática, Estatística e Computação Científica, Universidade Estadual de Campinas (Unicamp), 13083-859, Campinas, São Paulo, Brazil}
\date{\today}

\begin{abstract}
Multi-agent scenarios, like Wigner's friend and Frauchiger-Renner scenarios, can show contradictory results when a non-classical formalism must deal with the knowledge between agents. Such paradoxes are described with multi-modal logic as violations of the structure in classical logic. Even if knowledge is treated in a relational way with the concept of trust, contradictory results can still be found in multi-agent scenarios. Contextuality deals with global inconsistencies in empirical models defined on measurement scenarios even when there is local consistency. In the present work, we take a step further to treat the scenarios in full relational language by using knowledge operators, thus showing that trust is equivalent to the Truth Axiom in these cases. A translation of measurement scenarios into multi-agent scenarios by using the topological semantics of multi-modal logic is constructed, demonstrating that logical contextuality can be understood as the violation of soundness by supposing mutual knowledge. To address the contradictions, assuming distributed knowledge is considered, which eliminates such violations but at the cost of lambda-dependence. We conclude by translating the main examples of multi-agent scenarios to their empirical model representation, contextuality is identified as the cause of their contradictory results.
\end{abstract}

\maketitle
\tableofcontents

\section{Introduction}

Multi-agent paradoxes \cite{Frauchiger_2018,Nurgalieva_2020,Vilasini_2019} are violations of agreement among agents about some global information. They appear in generalizations of the Wigner's friend (thought experimental) scenario \cite{Wigner1995,Deutsch1985}, which itself extends Schrödinger's famous thought experiment with his cat \cite{Schrodinger1935}. The exploration of such paradoxes and other related phenomena has proven to be significant for the fundamental understanding of quantum theory and its interpretations \cite{brukner2015quantum,e20050350,Bong2020,e23080925,Haddara_2023,Rela_o_2020,Rossi_2021,AllardGuerin2021,schmid2023review}. The formal construction of these paradoxes employs the language of modal logic \cite{sep-logic-modal,sep-phil-multimodallogic,sep-logic-epistemic} to show how (thought experimental) scenarios, found both in quantum theory and in other non-classical theories beyond quantum theory, presents a violation in the structure of classical logic.

Contextuality in its standard definition \cite{Kochen1975} deals with global inconsistencies in measurements even when there is local consistency, that is, even if the model is non-disturbing, generalizing the famous phenomena of nonlocality \cite{PhysicsPhysiqueFizika.1.195} and the condition of no-signaling between observers \cite{Abramsky_2011}. This concept has been formalized in various ways in the literature, including the topological \cite{Abramsky_2011,Abramsky_2012,abramsky_et_al:LIPIcs:2015:5416,okay2017topological,2020Okay,montanhano2021contextuality,montanhano2021characterization}, the algebraic \cite{BirkhoffvonNeumann,PhysicsPhysiqueFizika.1.195,Kochen1975,Gleason}, the geometrical \cite{Cabello_2014,amaral2017geometrical} and generalizations \cite{Dzhafarov2015ContextualitybyDefaultAB,1994SORKIN,Spekkens2008,Schmid2021,https://doi.org/10.48550/arxiv.2202.08719}. This phenomenon has a significant impact on the fundamental properties of quantum theory \cite{doring2020contextuality}, and is necessary for any potential computational advantage over classical computers \cite{shahandeh2021quantum,Howard2014}. Its formal description employs the categorical language of sheaves and presheaves \cite{Abramsky2011}, enabling the construction of bundle diagrams for each model \cite{Beer_2018}. Equivalently, a contextual model implies the inability to describe it classically, in the sense of an embedding of propositions into a set of Boolean propositions with probabilistic valuation satisfying Kolmogorov's probability axioms, even with additional propositions in the set of Boolean propositions serving as hidden variables.

Our objective in this paper is to construct a map between empirical model and multi-agent scenarios such that the contextuality of the first and the multi-agent paradox of the second could be identified as the same phenomenon. To achieve this objective, we will need to rewrite the Sheaf Approach \cite{Abramsky_2011}, the formalism of contextuality closest to multi-modal logic, in such a way that we can understand how each part of the construction of empirical models can be identified in a multi-agent scenario. This will require a more refined treatment of multi-modal logic, as concepts like fundamental truth become elusive when non-classicality is present. Such refinement, which explicitly utilizes knowledge operators and the trust relation defined between sets and agents, can be understood as the logical formalization of Alfred Korzybski's statement ``A map is not the territory'' \cite{korzybski1933science} applied to the knowledge that agents can access. Consequently, our main finding is that, contrary to what is claimed in the literature \cite{Nurgalieva_2019}, modal logic is suitable for quantum and other non-classical settings. To illustrate this, we analyze three famous examples that accept the inverse in the constructed map.

The paper is structured as follows. We employ the topological semantics of multi-modal logic (specifically the $S4$ system) to initially investigate the application of trust \cite{Nurgalieva_2019,Vilasini_2019} when the knowledge operators are explicitly utilized. This exploration is grounded in the idea that knowledge is inherently relational—something is not merely known; it must be known by someone. Trust can be understood as a relational way to define the Truth Axiom; in fact, they are equivalent when seen by the topology induced by distributed knowledge of the agents, as shown in section \ref{Knowledge and trust}. We can thus use the knowledge operators and trust to create a translation between multi-agent scenarios and empirical models up to restrictions. In Section \ref{Contextuality}, we systematically identify the elements of an empirical model as elements of a multi-agent scenario in multi-modal logic, exploring the implications of this identification, and discussing the limitations of empirical models in handling generic multi-agent scenarios. The violation of soundness described in Ref. \cite{Nurgalieva_2019} that appears as the failure of classical logic to deal with quantum theory is identified as the hidden imposition of mutual knowledge on the agents. This implies the conclusion that modal logic fails to deal with multi-agent paradoxes. Interestingly, this issue disappears when distributed knowledge is imposed. A similar resolution occurs when translating the contextuality conditions to multi-modal language, albeit with the cost of lambda-dependence. Next, in section \ref{Multi-agent scenarios}, we work out the three examples of multi-agent paradoxes: the Wigner's friend scenario, the Frauchiger-Renner scenario, and the Vilasini-Nurgalieva-del Rio scenario, in topological semantics and translate them to their sheaf representation. We then identify contextuality as the origin of their paradoxes when they appear. In Section \ref{Discussion}, we provide insights into the results and explore future research possibilities. The appendices serve to fix the notation used and give the basics of modal logic in appendix \ref{Modal Logic}, and the sheaf approach to contextuality in appendix \ref{Sheaf approach}.

\section{Knowledge, trust, and the construction of fundamental truth}
\label{Knowledge and trust}

In this section, we will address the concept of truth in modal logic and its relationship with different types of knowledge operators. The Truth Axiom $\mathbf{T}$, which states that if a proposition is known by an agent, then it is true, appears to be too strong to handle multi-agent scenarios. Hence, as is conventional in the literature of multi-agent scenarios, we will not directly assume the Truth Axiom $\mathbf{T}$. Instead, we will use trust between agents, as suggested by \cite{Nurgalieva_2019}, to weaken the classical notion of fundamental truth. Even with this more relational approach to multi-agent scenarios, the same paradoxes arise. The reason for this, as we will see, is that we can use the more refined knowledge operator in the hierarchy of operators, distributed knowledge, to reconstruct a fundamental truth, which would be equivalent to the Truth Axiom.

We will work with the system $\mathbf{S4}$, which imposes Kripke semantics (Distribution Axiom) of possible worlds and accessibility relations, the transitivity (Positive Introspection Axiom) and the reflexive (Truth Axiom) properties of the accessibility relations, standard when dealing with epistemic logic \cite{sep-logic-epistemic}. In particular we will mostly work with the topological semantics of the system $\mathbf{S4}$, naturally related to a topological view of the Sheaf Approach. In such semantics, we can think of the propositions of the system $\mathbf{S4}$ as open sets in a set of possible worlds, with the usual operations between open sets (union, intersection, complement, interior, closure, complement) playing the role of logical operations between propositions (or, and, negation, necessity, possibility). For an introduction to the logical content of what follows, see appendix \ref{Modal Logic}.

\subsection{Trust instead of fundamental truth}

Let's introduce and justify the concept of trust between agents as an alternative to a fundamental truth. Multi-agent scenarios defined with modal logic in the literature use only trust between agents. To construct the map we aim for between empirical models and multi-agent scenarios in the next section, we will define here a generalization of the concept of trust that also applies to sets of agents. With such trust relation, we will formally define multi-agent scenarios.

Knowledge, mutual knowledge, and distributed knowledge operators (see appendix subsection \ref{Knowledge}) are important for writing formulas when one imposes the following principle: there is no knowledge without an agent. This principle can be understood as the embodiment of the obvious idea that fundamental truth, in the sense of being absolute to all agents, following Plato's ``justified true belief'' definition of knowledge, is a philosophical position rather than an empirical fact. One can only assume something is true for all agents, but cannot test such a thing.

As we will see in the examples in Section \ref{Multi-agent scenarios}, quantum theory imposes limitations on this Platonic view of knowledge aforementioned. One way to attempt to circumvent such limitations is to weaken the concept of knowledge to the ``justified belief'' definition, which does not presuppose any fundamental truth, but rather relies on justification based on the inevitably incomplete data accessible to the agent. Therefore, every formula must be evaluated through a knowledge operator. Axioms $\mathbf{K}$ and $\mathbf{4}$ do not present any issues once the operator is introduced. However, $\mathbf{T}$ relies on the notion of fundamental truth, leading to certain philosophical complications. Let's ignore them by allowing beliefs to be on the same level as knowledge. To simplify matters, we will assume any further mechanism beyond the scope of this paper to distinguish them. Given the absence of an absolute notion of knowledge, we must find knowledge by trust between agents \cite{Vilasini_2019,Nurgalieva_2019}.

\begin{axiom}[Trust]
The trust relation $\rightsquigarrow$ between agents $i$ and $j$ is given by
\begin{equation}
(j\rightsquigarrow i)\leftrightarrow(K_i K_j\phi\to K_i\phi)\forall\phi,
\end{equation}
meaning ``$i$ trusts $j$''.
\label{Trust Original}
\end{axiom}

This notion of trust between agents must be generalized to deal with sets of agents, as it will become important when we discuss the relationship between the trust relation and contexts. An agent $i$ could not trust the agents of a set $G$ separately, but only when seen as a collective entity. In other words, the agent $i$ trusts the distributed knowledge of $G$. In this sense, ``$i$ trusts $G$'' if and only if, for all propositions, the knowledge of $i$ that the distributed knowledge of $G$ implies the knowledge of $i$. This is the weakest way to describe such a relation, where all agents in $G$ could not know $\phi$ individually\footnote{A simple example of applying this concept of trust arises in a presidential election within a presidential system with direct voting. Agent $i$ wishes to determine which candidate has been elected. Immediately before the election results are disclosed and the winner becomes mutual knowledge, agent $i$ cannot rely solely on individual agents to ascertain the victor, as knowledge is distributed across the electorate. It is agent $i$'s trust in the electorate as a whole, denoted as $G$, that leads them to accept the election outcome.}. Since for an agent $i$, we have that $K_i$, $D_i$, and $E_i$ are equivalent, we can generalize agent $i$ to a set $G'$ of agents with the mutual knowledge operator $E_{G'}$ in the previous argument with minimal modifications, or even restrict trust in $G$ to the mutual knowledge of its agents. The two trust relations can thus be defined as follows.

\begin{axiom}[Trust (sets of agents)]
The trust relation $\stackrqarrow{D}{}$ between a set of agents $G'$ and a set of agents $G$ is given by
\begin{equation}
(G\stackrqarrow{D}{} G')\leftrightarrow(E_{G'}D_G\phi\to E_{G'}\phi)\forall\phi,
\end{equation}
meaning ``$G'$ trusts $G$'', and the trust relation $\stackrqarrow{E}{}$
\begin{equation}
(G\stackrqarrow{E}{} G')\leftrightarrow(E_{G'}E_G\phi\to E_{G'}\phi)\forall\phi,
\end{equation}
meaning ``$G'$ trusts the consensus of $G$''.
\label{Trust sets}
\end{axiom}

We can formally define multi-agent scenarios through Kripke semantics, knowledge operators, and the generalization of the concept of trust relation.

\begin{definition}
    A multi-agent scenario is given by a tuple 
    \begin{equation}
    (\Sigma, I, \{K_i\}_{i\in I}, \{E_G\}_{G\in\mathcal{P}(I)}, \{D_G\}_{G\in\mathcal{P}(I)}, \stackrqarrow{D}{}, \stackrqarrow{E}{}),
    \end{equation}
    where each element $i\in I$ is an agent, $\Sigma$ is the set of possible worlds the agents explore, $\{K_i\}_{i\in I}$ are the knowledge operators of each agent, $\{E_G\}_{G\in\mathcal{P}(I)}$ and $\{D_G\}_{G\in\mathcal{P}(I)}$ are the operators of mutual and distributed knowledge among sets of agents, $\stackrqarrow{E}{}$ and $\stackrqarrow{D}{})$ are the trust relations between sets of agents.
\end{definition}

\subsection{The relation between the topology of different kinds of knowledge}

The topological semantics of the system $\mathbf{S4}$ is deeply related to knowledge. The definition of the knowledge operator $K$ in Kripke semantics can be rewritten as:
\begin{equation}
(M,w\vDash K\phi)\leftrightarrow(M,U^w\vDash \phi)
\end{equation}
In other words, in the world $w$, the agent knows something if and only if for all worlds in the element $U^w$ of the topological basis of the Alexandrov topology, that something is true. Here again, we have the problem of fundamental truth, with the important property by $\mathbf{T}$ that $w\in U^w$, which allows one to interpret $U^w$ as the natural neighborhood of $w$. In this sense, an agent knows something in a world if it is true in a neighborhood of such a world.

Epistemic logic with more than one agent defines an Alexandrov topology for each accessibility relation, which can be interpreted as different ways the agents perceive the worlds. We have the following relationship:
\begin{equation}
(K_i\phi\to K_j\phi)\forall\phi\leftrightarrow (R_{j}\subseteq R_{i})\leftrightarrow(\tau_j\subseteq\tau_i)
\end{equation}
between the knowledge operators, the induced relation, and the topology, respectively, in Kripke and topological semantics. In particular, one can show that the relationships 
\begin{equation}
    (K_i\phi\to D_I\phi)\forall\phi\leftrightarrow (R_{D_I}\subseteq R_{i})\leftrightarrow(\tau_{D_I}\subseteq\tau_i)
    \label{individual implies distributed}
\end{equation}
and
\begin{equation}
    (E_I\phi\to K_i\phi)\forall\phi\leftrightarrow (R_{i}\subseteq R_{E_I})\leftrightarrow(\tau_{i}\subseteq\tau_{E_I})
    \label{mutual implies individual}
\end{equation}
hold. They state that something known to an agent $i\in I$ is also known distributively, and that something mutually known is known to any agent in $i\in I$, respectively.  Additionally, one can show that a fundamental property of distributed knowledge is 
\begin{equation}
    (D_I\phi\to\phi)\forall\phi,
    \label{distributed implies fundamental}
\end{equation}
meaning the distributed knowledge of something implies its truth, which follows from the Truth Axiom $\mathbf{T}$. 

The hierarchy of knowledges presented in the previous paragraph (mutual knowledge implies individual knowledge from \ref{mutual implies individual}; individual knowledge implies distributed knowledge from \ref{individual implies distributed}; distributed knowledge implies fundamental truth from \ref{distributed implies fundamental}) will be important in which follows. Each relationship between the knowledge operators, usually represented by its topological incarnation in topological semantics, will be explored in the translation of an empirical model. But first, as said before, we need to address fundamental truth.

\subsection{Recovering fundamental truth from trust}
\label{2.3}

By definition, the finest topology generated by all agents is $\tau_{D_I}$, the topology given by the distributed knowledge operator of the set of all agents, which implies that the most refined knowledge that this set of agents can construct is given by the distributed knowledge. Therefore, there is no way to construct any other knowledge operator that captures more knowledge of fundamental truth than $D_I$. Once we need to understand knowledge in a relational manner, always explicitly specifying the knowledge operator, and both $(D_I\phi\to\phi)\forall\phi$ and $(E_I\phi\to K_i\phi)\forall\phi$ hold, we obtain the following proposition, the proof of which is given in appendix \ref{proof}.

\begin{proposition}
The following statements are valid:
\begin{itemize}
    \item Axiom $\mathbf{T}$ turns trust relations vacuous.
    \item The trust relation $\stackrqarrow{D}{}$, along with the condition that $(\phi\to D_I\phi)\forall\phi$, induces a fundamental truth.
\end{itemize}
 \label{Prop}
\end{proposition}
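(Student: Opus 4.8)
The two bullets are the two directions of an equivalence between the trust relation $\stackrqarrow{D}{}$ and the Truth Axiom $\mathbf{T}$, read through the distributed-knowledge topology $\tau_{D_I}$, so the plan is to prove them separately, in each case using the topological semantics of $\mathbf{S4}$ and the hierarchy of operators recorded in \ref{individual implies distributed}--\ref{distributed implies fundamental}.

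\textbf{First bullet.} Assume $\mathbf{T}$, i.e.\ $K_i\phi\to\phi$ for every agent $i\in I$ and every $\phi$, and show that each of the three trust statements of Axioms \ref{Trust Original} and \ref{Trust sets} becomes a theorem, hence imposes nothing. For Axiom \ref{Trust Original}: from the theorem $K_j\phi\to\phi$, Necessitation for $K_i$ followed by the Distribution Axiom $\mathbf{K}$ (i.e.\ monotonicity of $K_i$) gives $K_iK_j\phi\to K_i\phi$; as $\phi$ is arbitrary, $j\rightsquigarrow i$ holds for all $i,j$. For the set versions, $\mathbf{T}$ first forces $E_G\phi\to\phi$ (since $E_G\phi\to K_{i_0}\phi\to\phi$ for any $i_0\in G$) and $D_G\phi\to\phi$ (the accessibility relation of $D_G$ is an intersection of reflexive relations, hence reflexive --- the arbitrary-$G$ form of \ref{distributed implies fundamental}); monotonicity of $E_{G'}$ then upgrades these to $E_{G'}D_G\phi\to E_{G'}\phi$ and $E_{G'}E_G\phi\to E_{G'}\phi$ for all $G,G'$, so $G\stackrqarrow{D}{}G'$ and $G\stackrqarrow{E}{}G'$ always hold. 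Here ``vacuous'' is to be read as ``valid for every pair, hence placing no condition on the accessibility relations''.

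\textbf{Second bullet.} Now drop $\mathbf{T}$ and assume instead that $\stackrqarrow{D}{}$ holds for the grand coalition --- $I\stackrqarrow{D}{}G'$, i.e.\ $E_{G'}D_I\phi\to E_{G'}\phi$ for every $G'$ --- together with the hypothesis $\phi\to D_I\phi$ for all $\phi$. I would make ``induces a fundamental truth'' precise as $D_I\phi\leftrightarrow\phi$ for every $\phi$, equivalently that $R_{D_I}$ is forced to be reflexive (indeed the identity), so that $D_I$ plays exactly the role $\mathbf{T}$ would have played. One implication is the hypothesis. For the converse $D_I\phi\to\phi$ I would argue as in Section \ref{2.3}: $\tau_{D_I}$ is the finest topology the agents can generate, so the hypothesis --- which says every proposition is $\tau_{D_I}$-open --- leaves the interior operator of $\tau_{D_I}$ no option but to be the identity, yielding $D_I\phi\leftrightarrow\phi$; the role of $\stackrqarrow{D}{}$ is then to certify that this reconstructed absolute truth is the one every subcoalition $G'$ is actually bound by, since after the collapse $E_{G'}D_I\phi\to E_{G'}\phi$ reads ``whatever $G'$ collectively knows is true'', which is the Truth Axiom pattern now derived rather than postulated. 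With the first bullet this gives the advertised equivalence ``trust $\Leftrightarrow \mathbf{T}$'' as seen through $\tau_{D_I}$.

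\textbf{Expected main obstacle.} The subtlety is entirely in the second bullet: $\mathbf{T}$ has been deliberately renounced, yet Section \ref{2.3} already invokes $D_I\phi\to\phi$, so I must track carefully which parts of the operator hierarchy are genuinely available without $\mathbf{T}$ and isolate exactly where the trust relation $\stackrqarrow{D}{}$ --- as opposed to the mere finestness of $\tau_{D_I}$ --- is indispensable. Getting the quantifier scope right (that $\stackrqarrow{D}{}$ is assumed over all relevant coalitions, and that ``fundamental truth'' means the $D_I$-level collapse, not the individual Truth Axiom) is what makes the statement simultaneously correct and non-trivial.
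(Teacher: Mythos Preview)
Your argument for the first bullet is exactly the paper's: from $\mathbf{T}$ one gets $K_j\phi\to\phi$, $E_G\phi\to\phi$, $D_G\phi\to\phi$, and then monotonicity of the outer operator (your Necessitation $+$ $\mathbf{K}$ step) yields all three trust schemas automatically. The paper's proof is terser but identical in content.

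For the second bullet your route diverges slightly from the paper's. The paper never actually drops $\mathbf{T}$: it works throughout in $\mathbf{S4}$, so $D_I\phi\to\phi$ is already available (this is equation \eqref{distributed implies fundamental}), and the proof simply combines it with the hypothesis $\phi\to D_I\phi$ to get $\phi\leftrightarrow D_I\phi$, then remarks that $\tau_{D_I}$ is finest and that every agent trusts $D_I$ via $K_iD_I\phi\to K_i\phi$. Your version is more ambitious: you suspend $\mathbf{T}$ and recover $D_I\phi\to\phi$ from the hypothesis alone via the topological observation that $\nu(\phi)\subseteq\mathrm{int}_{\tau_{D_I}}\nu(\phi)$ forces the interior operator to be the identity. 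That argument is correct and neatly sidesteps the circularity you flag in your ``expected obstacle'' paragraph, but it is extra work the paper does not do --- the paper's own stance (despite the surrounding prose about ``not directly assuming $\mathbf{T}$'') is to stay in $\mathbf{S4}$ and read the proposition as an \emph{interpretive} equivalence rather than a derivation of $\mathbf{T}$ from weaker axioms. So your proof is sound, and for the second bullet arguably cleaner about what is assumed; just be aware that the paper's version is shorter because it never leaves $\mathbf{S4}$.
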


What the above proposition intuitively means is that one can say there is a limit to the knowledge a set of agents can access, which is the distributed knowledge, and there is no way to distinguish this limit from a fundamental limit of reality\footnote{In this sense, it is not surprising that an isolated population in an approximately stable environment that comes into contact with another civilization can suffer a significant impact on their culture. If they survive, their distributed knowledge usually loses its fundamentality.}\footnote{One could use a Bayesian vision to justify the existence of absolute truth through an induction argument, which holds in a classical description of reality, but it must be limited by Kant's epistemology.}. 

\section{The modal logic of an empirical model and contextuality}
\label{Contextuality}

In this section, we will construct the map that turns a generic empirical model into a multi-agent scenario in such a way that we can identify the paradoxes of the former with the contextuality of the latter. For simplicity, we will deal with finite objects. 

As we will see, assumptions about the nature of a scenario, such as classicality manifested in the form of global concordance between agents, impose restrictions on the possible worlds presumed to be accessible to agents through their topologies. We will demonstrate that paradoxes arise when an agent discovers that these worlds are insufficient to explain the scenario, potentially leading to the mistaken conclusion that modal logic is incapable of handling non-classical cases. We adopted a strategy of explicitly defining the knowledge operators, with a particular focus on identifying limits on the knowledge of a set of agents. This approach leads to another interesting consequence—the possibility of rethinking how possible worlds are chosen. We will use this when dealing with events in an empirical model. 

Another important point is that we need to make it clear that the agents' knowledge will not be distorted when transferred between agents. In order for the trust relationship to truly be trustworthy, we need to define what makes someone trustworthy, even if there is trust.

\begin{definition}
Given that $j\rightsquigarrow i$, an agent $j$ is trustworthy to the agent $i$ if $(K_i K_j\phi\to K_j\phi)\forall\phi$, i.e. if any information that agent $i$ knows from agent $j$ must also be known by agent $j$.
\label{trustworthy}
\end{definition} 

This condition is essential to prevent any hidden information; therefore, trust implies that the topology of the trusting party is finer than the trustworthy part. With trust, an agent can reconstruct all the information provided by its trustworthy part, which encompasses all of its information. An agent that is terminal in the network generated by the trust relation can, under this condition, reconstruct the information of all the agents, obtaining a global perspective of the knowledge.

\subsection{A measurement to each agent}

To construct the map we seek, we will follow the steps involved in building an empirical model. First, we start with a set $X$ containing elements referred to as measurements. Subsequently, we define a cover of contexts $\mathcal{M} \subset \mathcal{P}(X)$ that satisfies $\bigcup \mathcal{M} = X$, and $C' \subset C \in \mathcal{M}$ implies $C' \in \mathcal{M}$. Next, we apply a sheaf of events $\mathcal{E}:\left<X,\mathcal{M}\right>^{op}\to\textbf{Set}$, associating outcomes $O^{U}$ with each context $U \in \mathcal{U}$ — the local events — thus defining a measurement scenario. The global events follow from the sheaf property of $\mathcal{E}$. An empirical model is then characterized by a presheaf $\mathcal{D}_R:\textbf{Set}\to\textbf{Set}::O^{U}\mapsto\left\{\mu^{O^{U}}_R\right\}$, with $R$ a semiring, typically of the probabilistic or boolean type. Such a presheaf specifies distributions with values in $R$ for the outcomes of each context, usually imposing the no-disturbance condition $\mu^{O^{j}}_R|_{kj}=\mu^{O^{k}}_R|_{kj}$. Finally, contextuality arises when it becomes impossible to explain these distributions as marginalizations of a distribution on the global events. For further details, refer to Appendix \ref{Sheaf approach}.

The natural mapping of the measurements of an empirical model to a multi-agent scenario is achieved through the agents' measurements, as each agent in a multi-agent scenario is restricted to a measurement. Our first identification is as follows:
\begin{itemize}
    \item The measurements of an empirical model are mapped to the agents of its corresponding multi-agent scenario.
\end{itemize}
It's important to note that what we are identifying here is each measurement of an empirical model with a measurement of an agent in a multi-agent scenario. This constraint differs from scenarios in which agents have multiple measurements and the free will to choose among them, as in standard nonlocality scenarios. Through distributed knowledge, we have:
\begin{equation}
(K_i (K_i \phi)\to D_I (K_i \phi)\to K_i \phi)\forall\phi,
\end{equation}
which means that an agent trusts itself. This is in contrast to scenarios where an agent can choose between incompatible measurements. Therefore, agents in a multi-agent scenario cannot choose their measurements. If they could, each measurement would define different agents who cannot trust each other due to the incompatibility of their measurements.

\subsection{Contexts comes from trust}

Contexts can be understood as an island of classicality at the measurement level. Every measurement is nothing more than the marginalization of the ``mother'' measurement of the entire context. Such a definition of a context allows the construction of stochastic maps between all subcontexts, and these maps can define the probabilities of the context given the marginals.

Once the measurements are identified with the agents in the multi-agent scenario, the covering of contexts locally define agents who are in a classical environment, with the maps transmitting knowledge from one subcontext to another. Since the subcontexts, and therefore the agents, are terminal in the network of maps between subcontexts, and by imposing that they are all trustworthy, a condition called ``flasque beneath the cover'' in the literature of the Sheaf Approach, they all have a global view of the context $U$ in which they are. In logical notation, for any subcontexts $G,G'\in U$ we have the following:
\begin{equation}
    (E_G E_{G'}\phi\to E_G\phi)\forall\phi.
\end{equation}

Thus, in a context its classicality imposes that for any two subcontexts $G,G'\in U$ the trust relation $G\stackrqarrow{E}{} G'$ holds, with $\stackrqarrow{E}{}$ being an equivalence relation between the subcontexts of $U$ (reflexivity follows from trust itself, symmetry follows from the symmetry in the choice of subcontexts of $U$, and transitivity follows from the transitivity of maps between subcontexts), identified with the stochastic maps. Our second identification is as follows:
\begin{itemize}
\item Contexts are sets of agents in which the trust relation $\stackrqarrow{E}{}$ is an equivalence relation between subcontexts.
\end{itemize}
With these two identifications, we can rewrite the hypergraph of compatibility $\left<X, \mathcal{U}\right>$ of any empirical model as parts of a multi-agent scenario with the property of being covered by sets of agents with an equivalence trust relation $\stackrqarrow{E}{}$ between their subsets.

\subsection{Global events follows from the topology of mutual knowledge}

The next step is to identify the events of the empirical model given by the Sheaf of events. As elements of reality to which we have empirical access, such events must naturally be associated with some structure involving the possible worlds. However, worlds are defined globally, while events do not need to be. Here, we will see that such a distinction is related to the topology we are dealing with.

To overcome this apparent obstacle, the strategy is to use pointless topology \cite{Johnstone1983}. In this formalism, we start with the topology and its elements — the open sets — as the primitives of the topological space. In the topological semantics of modal logic, the use of pointless topology implies that we will take propositions, represented by the open sets, as the primitives, which is equivalent to the standard formalism where possible worlds are the primitive objects. When propositions are considered as primitives, the focus shifts to the operator of knowledge, its topology, and the set of agents that define it, with possible worlds emerging as consequent constructions derived from the propositions.

One cannot know the fundamental possible worlds, even if they exist, but only the propositions one can access. In other words, the worlds are defined by the propositions, and not the other way around. Therefore, possible worlds must be defined by the topology to which an agent has access, as the elements of a basis of such topology. Even with the most refined set of propositions, the questions that agents pose about the world cannot be taken as refined as the fundamental truth. Thus, any construction must begin with the limitation of the agents in ``mapping the territory.'' In an empirical model, this situation becomes more complicated, as we also have the local events organized into contexts as the propositions attempting to find out the possible world in which the scenario is in a locally classical way. Generally, there isn't a ``global map'' of the ``territory'' when dealing with empirical models.

Let $\left<X,\mathcal{U}\right>$ be the hypergraph of compatibility of a connected measurement scenario. Since it is connected and $\stackrqarrow{E}{}$ satisfies the conditions that determine a context, every element of $\mathcal{U}$ is a terminal object in the trust between sets of agents. In logical terms, for any $G\in \mathcal{P}(I)$ representing a element of $\mathcal{U}$, with $\mathcal{P}(I)$ being the power set of $I$, we have $(E_G\phi\to E_I\phi)\forall\phi$, implying $\tau_{E_G}=\tau_{E_I}$ for all $G$ representing a context, as $G\subseteq I$ implies $\tau_{E_I}\subseteq\tau_{E_G}$ and $(E_G\phi\to E_I\phi)\forall\phi$ implies $\tau_{E_G}\subseteq\tau_{E_I}$. Let's call $\mathcal{B}_{E_I}$ the basis of $\tau_{E_I}$. We define the possible worlds as $\Sigma=\mathcal{B}_{E_I}$, and the accessible relations $R_i$ as induced by $\tau_i=\tau_{E_I}$. The elements of $\mathcal{B}_{E_I}$ are global and atomic objects, such as global events. Our next identification is as follows:
\begin{itemize}
    \item Global events correspond to the basis topology of the mutual knowledge operator.
\end{itemize}
Therefore, any global description of an empirical model is given by the possible worlds $\Sigma=\mathcal{B}_{E_I}$ induced by the mutual knowledge. Therefore, according to Fine-Abramsky-Brandenburger Theorem \ref{FAB}, we can conclude the following.

\begin{result}
    Mutual knowledge is the knowledge that logically explains non-disturbing outcome-deterministic noncontextual empirical models.
\end{result}

\subsection{Local events follows from the topology of distributed knowledge}

In an analogous way to the previous argument, we can identify local sections as the elements of the basis of the topology induced by the mutual knowledge of their respective context. Since in a context $G$, every subcontext trusts each other, we have $D_G=E_G$: all distributed knowledge is described by mutual knowledge between the agents, with each of them having the information of all $G$. This is an example of how the trust relation influences the definition of distributed knowledge. Local events are the most refined propositions that can be made in the empirical model while respecting the contexts and the non-disturbing condition, and thus generate the most refined topology. On the other hand, the topology generated by the distributed knowledge operator is the most refined topology possible among a set of agents. Therefore, we can identify $\mathcal{B}_{D_I}$ as given by the local events. Our final identification of an element of a measurement scenario is as follows:
\begin{itemize}
    \item Local events correspond to the basis topology of the distributed knowledge operator.
\end{itemize}

The mapping of a measurement scenario to its respective multi-agent scenario is summarized in the following dictionary:

\begin{result}
    Given a measurement scenario $\left<X,\mathcal{U},(O_x)_{x\in X}\right>$ with the sheaf of events $\mathcal{E}$, the identification in the table below defines a multi-agent scenario with a set of agents $I$, trust relation $\stackrqarrow{E} {}$ and knowledge operators $D_I$ and $E_I$ induced by the basis topologies $\mathcal{B}_{D_I}$ and $\mathcal{B}_{E_I}$, respectively.
    \begin{table}[h!]
    \centering
    \begin{tabular}{| c | c |}
\hline\noalign{}
\textbf{Measurement scenario} & \textbf{Multi-agent scenario} \\
\hline\noalign{}
$X$ & $I$ \\ 
\hline\noalign{}
$\mathcal{U}$ & \makecell{$G\subset I$ with $\stackrqarrow{E} {}$\\ an equivalence relation}\\
\hline\noalign{}
$\mathcal{E}(X)$ & $\mathcal{B}_{E_I}$ \\
\hline\noalign{}
${\mathcal{E}(U)}_{U\in\mathcal{U}}$ & $\mathcal{B}_{D_I}$ \\
\hline\noalign{}
    \end{tabular}
\end{table}
\label{ResultadoTabelado}
\end{result}

\subsection{Logic contextuality is the failure of soundness}

How does contextuality manifest when viewed from a modal perspective and how does it relate to multi-agent paradoxes? The map presented in Result \ref{ResultadoTabelado} allows us to answer this question, as we will do below. For simplicity, we will limit ourselves to Boolean valuation functions, while making it clear that the probabilistic case with outcome-determinism follows from probabilistic distribution over the logical events identified by such Boolean valuation functions.

Previously, we saw that if an agent is terminal in the trust relationship, it has access to all the information of the other agents and sets of agents. Therefore, it can reconstruct the global view of the multi-agent scenario, and every other terminal agent will also agree with this description. What happens if the agents cannot agree on their global description? Well, one can argue that trust between agents and the sharing of information are not enough to access all the information of a scenario; $D_I \neq E_I$. In other words, the fundamental truth cannot be accessed by any agent individually. This is the key to characterizing contextual behavior, as we will see.

In an empirical model with Boolean valuation, the equation that represent noncontextuality is as follows
\begin{equation}
    \mu_{R}^{O^{U}}(A)=\sum_{\Lambda}p\left(\lambda\right)\prod_{x\in U}\mu_{R}^{O^{x}}(\rho'(U,x)(A)).
\end{equation}
This equation has every function as a Boolean function, thus outcome-determinism is satisfied. It also evaluates a formula $\phi$ by asking if, given all the possible worlds, one can semantically evaluate $\phi$ from them. Translating it in logic language, we get the following result:

\begin{result}[Logical contextuality]
    The definition of logical noncontextuality condition in logical notation takes the form of
\begin{equation}
\bigvee_{\lambda\in\Lambda}(\lambda\wedge(\lambda\to\phi))\vDash\phi,
\end{equation}
where due to the $\mathbf{S4}$ system always holds that
\begin{equation}
\bigvee_{\lambda\in\Lambda}(\lambda\wedge(\lambda\to\phi))\vdash\phi.
\end{equation}
Therefore, it is the violation of soundness which gives contextuality in the logical form.
\label{Logical noncontextuality}
\end{result} 

In Ref. \cite{Nurgalieva_2019}, it is stated that we have an ``inadequacy of modal logic in quantum settings'' precisely because we have multi-agent scenarios that exhibit logical paradoxes resulting from violations of soundness. Without the violation of soundness, multi-agent paradoxes do not arise\footnote{Since we trivially have completeness of the valuation in a multi-agent scenario, it is inconsistencies that give rise to multi-agent paradoxes. Soundness implies consistency, thereby avoiding the paradoxes.}. However, the identification we made in Result \ref{ResultadoTabelado} allows us to have an insight into what is actually happening, but which is left implicit in the literature. We can rewrite Result \ref{Logical noncontextuality} as follows:

\begin{result}[Logical contextuality with knowledge operators]
The definition of logical noncontextuality condition in logical notation and with explicit knowledge operators as identified in Result \ref{ResultadoTabelado} takes the form of
\begin{equation}
\bigvee_{E_I\lambda\in\mathcal{B}_{E_I}}(E_I\lambda\wedge(E_I\lambda\to K_i\phi))\vDash K_i\phi.
\end{equation}
This semantic equation does not always hold even if
\begin{equation}
\bigvee_{E_I\lambda\in\mathcal{B}_{E_I}}(E_I\lambda\wedge(E_I\lambda\to K_i\phi))\vdash K_i\phi
\end{equation}
holds syntactically. In other words, the logical form of contextuality follows from the violation of soundness when we define possible worlds as the elements of $\mathcal{B}_{E_I}$.
\label{Logical noncontextuality with knowledge operators}
\end{result}

The last equation of Result \ref{Logical noncontextuality with knowledge operators} states that if one can describe $\phi$ using elements from $\mathcal{B}_{E_I}$ that are true, then the agents know it. This differs from the semantic equation, where all $\phi$ must be described by it. This time, contextuality is not the failure of soundness that renders modal logic inadequate to deal with paradoxical behavior. Instead, it is the choice of the set of possible worlds as $\Sigma=\mathcal{B}_{E_I}$ that forces the logical violation to manifest in this way. There are insufficient possible worlds to adequately describe non-classical settings.

\subsection{Modal logic is suitable for
non-classical settings}

Paradoxes do not imply that modal logic is inadequate, as it is sound and complete in topological semantics. The problem is that we assume all descriptions must be consistent and global, which is not true. In other words, the worlds we are constructing in our scenario are too simplistic; they are defined by $E_I$, thereby disregarding any information beyond mutual knowledge.

We need to encode the information from all the empirical models, describing every detail of the agents, including their trust relationships, into the possible worlds. This is the case where the model exhibits lambda-dependence, where the worlds depend on the contexts. The elements of the basis must be the set of local events with their respective contexts, which, according to Result \ref{ResultadoTabelado}, is exactly given by $\tau_{D_I}$. With this new set of possible worlds, contextuality ceases to be the failure of soundness and becomes a matter of the empirical model not being described by classical worlds. 

\begin{result}[Logical contextuality as $E_I\neq D_I$]
Due to soundness and completeness of the topological semantics, 
\begin{equation}
\bigvee_{D_I\lambda\in\mathcal{B}_{D_I}}(D_I\lambda\wedge(D_I\lambda\to K_i\phi))\vDash K_i\phi.
\end{equation}
if and only if
\begin{equation}
\bigvee_{D_I\lambda\in\mathcal{B}_{D_I}}(D_I\lambda\wedge(D_I\lambda\to K_i\phi))\vdash K_i\phi.
\end{equation}
Therefore, contextuality is the difference between $E_I$ and $D_I$.
\label{Logical contextuality as}
\end{result}

The fundamental set of worlds is $\Sigma = \mathcal{B}_{D_I}$, while for a classical description, we are assuming that $\mathcal{B}_{E_I}$ is the set of words, a coarse-graining of the fundamental truth. In fact, Result \ref{Logical contextuality as} allows us to reach the following:

\begin{corollary}
    Modal logic is able to deal with the apparent violations if we do not restrict the knowledge to a mutual one, which we usually implicitly do.
\end{corollary}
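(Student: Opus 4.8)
The plan is to read this corollary as an immediate consequence of Result \ref{Logical contextuality as} together with the dictionary of Result \ref{ResultadoTabelado} and the hierarchy of knowledge operators established in Section \ref{Knowledge and trust}. First I would make precise what ``deal with the apparent violations'' means: by the soundness and completeness of the topological semantics of $\mathbf{S4}$, the syntactic derivability $\bigvee_{D_I\lambda\in\mathcal{B}_{D_I}}(D_I\lambda\wedge(D_I\lambda\to K_i\phi))\vdash K_i\phi$ and the semantic entailment $\bigvee_{D_I\lambda\in\mathcal{B}_{D_I}}(D_I\lambda\wedge(D_I\lambda\to K_i\phi))\vDash K_i\phi$ coincide, so no failure of soundness — and hence, by the footnote argument (soundness implies consistency, and completeness of valuation is trivial), no multi-agent paradox — can arise once the possible worlds are taken to be $\Sigma=\mathcal{B}_{D_I}$. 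The content of the corollary is then the contrapositive observation that the apparent inadequacy only appears under the implicit restriction $\Sigma=\mathcal{B}_{E_I}$, i.e. under the assumption that agents' knowledge is mutual rather than distributed.

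The key steps in order: (1) Recall from Result \ref{ResultadoTabelado} that $\mathcal{B}_{E_I}$ corresponds to the global events $\mathcal{E}(X)$ while $\mathcal{B}_{D_I}$ corresponds to the local events $(\mathcal{E}(U))_{U\in\mathcal{U}}$, so the choice $\Sigma=\mathcal{B}_{E_I}$ is exactly the choice to describe the empirical model by global sections — the classical, lambda-independent description. (2) Recall from \eqref{individual implies distributed}, \eqref{mutual implies individual} and \eqref{distributed implies fundamental} that $\tau_{E_I}\subseteq\tau_i\subseteq\tau_{D_I}$, and that $D_I$ already saturates the refinement available to the set $I$ of agents (as argued in Section \ref{2.3}), so $\mathcal{B}_{D_I}$ is the most refined — and in this sense the ``fundamental'' — set of worlds the agents can reconstruct. (3) Apply Result \ref{Logical contextuality as}: with $\Sigma=\mathcal{B}_{D_I}$ the noncontextuality entailment holds semantically iff it holds syntactically, so soundness is not violated; contextuality is relocated to the strict inclusion $\tau_{E_I}\subsetneq\tau_{D_I}$, i.e. $E_I\neq D_I$, which is a statement about the empirical model failing to admit a classical (global-section) description, not a statement about the logic failing. (4) Conclude: the only step at which modal logic ``fails'' is the passage from $\mathcal{B}_{D_I}$ to $\mathcal{B}_{E_I}$, which is not forced by the logic but is the implicit modelling assumption of mutual knowledge; dropping it, modal logic handles the scenario consistently. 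This establishes the corollary.

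I expect the main obstacle to be not a calculation but a matter of interpretation and of pinning down precisely what ``is able to deal with'' should mean so that the corollary is a genuine logical consequence rather than a philosophical gloss. Concretely, one must be careful that $\mathcal{B}_{D_I}$ really is a legitimate Alexandrov basis inducing well-defined accessibility relations $R_i$ (so that the topological semantics, and thus the soundness/completeness used in Result \ref{Logical contextuality as}, genuinely applies) even in the lambda-dependent case where worlds carry context labels; this is where one leans on the pointless-topology construction of Section \ref{Contextuality}, taking the local events with their contexts as the primitive open sets. The secondary subtlety is to make explicit the ``which we usually implicitly do'' clause: one should exhibit that the standard modal-logic treatments of Wigner's-friend-type scenarios (e.g. \cite{Nurgalieva_2019}) do, in their choice of Kripke frame, identify the possible worlds with global outcome assignments, i.e. with $\mathcal{B}_{E_I}$ — a reading that is defended informally in the preceding subsections and that I would simply cite rather than re-derive. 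Once these two points are granted, the corollary follows formally from Results \ref{ResultadoTabelado} and \ref{Logical contextuality as}.
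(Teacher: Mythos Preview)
Your proposal is correct and takes essentially the same approach as the paper, which simply states the corollary as an immediate consequence of Result \ref{Logical contextuality as} without further proof. Your elaboration of the four steps and the care about the pointless-topology construction of $\mathcal{B}_{D_I}$ are sound and merely make explicit what the paper leaves implicit in the sentence preceding the corollary.
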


\subsection{Limitations of the map for multi-agent scenarios}

The map constructed here has limitations in handling multi-agent scenarios with the contextual toolkit, and the cause of these limitations lies in the stringent constraints of the Sheaf Approach. To meet the criteria for analysis using the Sheaf Approach to contextuality, certain conditions must be fulfilled. 

Firstly, the agents should have only one measurement each, and these measurements must satisfy outcome-determinism, i.e., in quantum theory, they must be projection-valued measures\footnote{One can generalize the Sheaf Approach to deal with outcome-indeterminism \cite{Wester_2018}, but that is outside the scope of this article since the examples satisfy outcome-determinism.}. The trust relation defined between agents, and more generally between elements of the power set of the set of agents, must conform to the structure of contexts, specifically the equivalence trust relation. In particular, it must be symmetric, which prohibits the use of the map to address non-classicality in causal structures, a significant kind of generalization of the Wigner's friend scenario. Once these conditions are met, the measurement scenario becomes well-defined. 

To establish an empirical model, the events must satisfy the sheaf conditions, while the valuation must satisfy the no-disturbance condition. When these conditions are met, equivalence becomes possible, allowing one to explore the multi-agent paradox as contextuality using the tools of the Sheaf Approach.

\section{Three examples of contextuality in multi-agent scenarios}
\label{Multi-agent scenarios}

We will apply the previous results to analyze three well-known examples of multi-agent scenarios: Wigner's friend scenario, Frauchiger-Renner scenario, and Vilasini-Nurgalieva-del Rio scenario. Before delving into the actual examples, let's discuss common properties of these three scenarios.

The scenarios are formed by a set $I$ of agents. We will use names for Wigner, his friend (Alice), and their duplicated versions (Ursula and Bob respectively). 

The trust relation in all examples occurs between individual agents, simplifying the trust relations of Proposition \ref{Trust sets} to the usual definition in Axiom \ref{Trust Original}. Furthermore, the trust relation is symmetric in all examples. Therefore, we can represent the multi-agent scenarios as empirical models. They have contexts with two measurements defining a covering of contexts as a graph, where the measurements are identified as the vertices and the maximal contexts as the edges. All measurements have two outcomes, defining four local events.

All examples begin with a system in a certain initial state (Wigner's friend and Frauchiger-Renner scenarios a quantum states, Vilasini-Nurgalieva-del Rio scenario a Popescu-Rohrlich box). As shown in Ref. \cite{Vilasini_2019}, the act of an agent measuring the state defines an isomorphism between the system and the agent, allowing us to ignore the system and deal only with agents. In the same reference, it is also shown that the scenarios satisfy the property called "information-preserving memory update," which implies the same data being accessed by all the agents, i.e., there is trustworthiness among the agents who trust each other.

\subsection{Wigner's Friend scenario is noncontextual}

The standard Wigner's Friend scenario is defined with Alice $A$ performing a measurement on the system $R$, and with Wigner $W$ describing $R$ and $A$ in an entangled state due to her previous measurement. It asks for the different points of view between Alice and Wigner in the fundamental description of the nature of the probabilities involved. 

The scenario deals with an initial state $\ket{\phi}=\alpha\ket{0}+\beta\ket{1}$, with Alice's measurement in the basis $\{\ket{0},\ket{1}\}$. The problem here is where to put the Heisenberg's cut, before or after Alice. From Alice's point of view, after her measurement, the state is in a classical probability distribution $p_R(0)=\alpha^2$ and $p_R(1)=\beta^2$, and if she has already observed the result, it is certain to be one given eigenvalue. However, from Wigner's point of view, $R$ and $A$ defines a system $R\otimes A$ in a superposition being described by $\ket{\phi}$, thus the system and therefore Alice are described by a quantum superposition of states. 

There is no empirical contradiction here, as the classical probability distribution and the quantum state will give the same probabilities, and no discordance appears between Alice and Wigner. The problem that the Wigner's Friend scenario brings up is of an ontological nature: what is really happening with Alice?

Let's construct the empirical model of this scenario. Let Wigner perform a measurement in the system given by $R\otimes A$. We identify Alice and Wigner as the agents and we ignore the system $R$. There are two possibilities. The first one deals with Wigner's measurement being compatible with Alice's, which results in both of them trusting each other and defining a context
\begin{equation}
    A\leftrightsquigarrow W.
\end{equation}
The second possibility changes the basis in which Wigner performs his measurement to an incompatible one, for example $\ket{+}=\sqrt{\frac{1}{2}}\left(\ket{0}+\ket{1}\right)$ and $\ket{-}=\sqrt{\frac{1}{2}}\left(\ket{0}-\ket{1}\right)$. To Wigner, Alice's measurement is represented as a unitary transformation on $R\otimes A$ that changes Alice's state to a superposition. To him, the probability will be $p_{R\otimes A}(+)=\frac{\left(\alpha+\beta\right)^2}{2}$ and $p_{R\otimes A}(-)=\frac{\left(\alpha-\beta\right)^2}{2}$. To Alice, there is no probability at all if she already saw the measurement result and Wigner's measurement will just project the reduced state to his new basis. The problem here is that she knows her result, and Wigner erased it with his measurement, allowing no contradiction once the measurement erased Alice's memory as well\footnote{There is the problem of how to do it with a macroscopic entity, but this is not the point here. Our objective is not to address the possibility of implementing the scenario but rather to identify the source of paradoxical behavior in a generic and formal manner.}. 

Both possibilities allow analysis by considering only the measurement scenario. The first one has only one context, thus it must be noncontextual. The second one differs from the first by displaying two nonconnected contexts, making it a noncontextual empirical model. We can conclude:

\begin{result}
Wigner's friend scenario is represented by an empirical model with nonconnected contexts, therefore it is noncontextual and, consequently, shows no multi-agent paradox.
\end{result}

A realization of this result can be found in Ref. \cite{Lostaglio_2021}, where the authors construct a noncontextual model for Wigner's friend scenario.

\subsection{Frauchiger-Renner scenario is logic contextual}

The Frauchiger-Renner scenario \cite{Frauchiger_2018} starts with an entangled state 
\begin{equation}
    \ket{\phi}=\sqrt{\frac{1}{3}}\ket{0}\otimes\ket{0}+\sqrt{\frac{2}{3}}\ket{1}\otimes\sqrt{\frac{1}{2}}\left(\ket{0}+\ket{1}\right).
    \label{state}
\end{equation}
between two systems, $R$ and $S$, measured in the basis $\{\ket{0},\ket{1}\}$ by a respective friend, Alice $A$ and Bob $B$. After measurement, $R\otimes A$ and $S\otimes B$ are isomorphic to agents $A$ and $B$, respectively. As mentioned before, we can thus ignore the systems $R$ and $S$. The system $A$ measured by Ursula $U$ and the system $B$ measured by Wigner $W$ are measured in the basis $\{\ket{+},\ket{-}\}$, with $\ket{+}=\sqrt{\frac{1}{2}}\left(\ket{0}+\ket{1}\right)$ and $\ket{-}=\sqrt{\frac{1}{2}}\left(\ket{0}-\ket{1}\right)$. A locality argument can be used to describe who trusts whom. As we can ignore $R$ and $S$, the agents are Alice, Bob, Ursula, and Wigner. Trust is symmetric, and Alice's (Bob's) measurement is incompatible with Ursula's (Wigner's) measurement. Thus we get $A\leftrightsquigarrow W$, $U\leftrightsquigarrow B$, $A\leftrightsquigarrow B$, and $U\leftrightsquigarrow W$.

Once we are given the outcomes of the measurements, we can define the possible worlds using the knowledge operators of each agent. The topology induced by the mutual knowledge $E_I$ is generated by the elements of the basis, which consist of all $2^{4}$ combinations of the outcomes from the four agents. The outcome of a single agent is represented by the union of all the elements of this basis that contains such an outcome. The valuation is given by the initial state, but can only be calculated for the set of agents which mutually trust. $\ket{\phi}_{A\leftrightsquigarrow B}$ can be written exactly like equation \ref{state}, while the state that will be measured by $U\leftrightsquigarrow W$ will be
\begin{equation}
\begin{split}
    \ket{\phi}_{U\leftrightsquigarrow W}=&\sqrt{\frac{1}{12}}\left(\ket{+}+\ket{-}\right)\otimes\left(\ket{+}+\ket{-}\right)\\
    &+\sqrt{\frac{1}{3}}\left(\ket{+}-\ket{-}\right)\otimes\ket{+},
\end{split}
\end{equation}
and for $U\leftrightsquigarrow B$
\begin{equation}
\begin{split}
    \ket{\phi}_{U\leftrightsquigarrow B}=&\sqrt{\frac{1}{6}}\left(\ket{+}+\ket{-}\right)\otimes\ket{0}\\ &+\sqrt{\frac{1}{6}}\left(\ket{+}-\ket{-}\right)\otimes\left(\ket{0}+\ket{1}\right),
\end{split}
\end{equation}
and finally for $A\leftrightsquigarrow W$
\begin{equation}
    \ket{\phi}_{A\leftrightsquigarrow W}=\sqrt{\frac{1}{6}}\ket{0}\otimes\left(\ket{+}+\ket{-}\right)+\sqrt{\frac{2}{3}}\ket{1}\otimes\ket{+}.
\end{equation}
Calling the outcomes $+$ and $-$ of Ursula and Wigner respectively as $0$ and $1$, we can construct the table of the probabilities as shown in Table \ref{tab}.

\begin{table}[ht]
\begin{tabular}{|c|c|c|c|c|}
\hline
& \large{ $00$ } & \large{ $01$ } & \large{ $10$ } & \large{ $11$ } \\
\hline
\large{$A\leftrightsquigarrow B$} & \large{$\frac{1}{3}$} & \large{$0$} & \large{$\frac{1}{3}$} & \large{$\frac{1}{3}$} \\
\hline
\large{$A\leftrightsquigarrow W$} & \large{$\frac{1}{6}$} & \large{$\frac{1}{6}$} & \large{$\frac{2}{3}$} & \large{$0$} \\
\hline
\large{$U\leftrightsquigarrow W$} & \large{$\frac{3}{4}$} & \large{$\frac{1}{12}$} & \large{$\frac{1}{12}$} & \large{\color{red}$\frac{1}{12}$} \\
\hline
\large{$U\leftrightsquigarrow B$} & \large{$\frac{2}{3}$} & \large{$\frac{1}{6}$} & \large{$0$} & \large{$\frac{1}{6}$} \\
\hline
\end{tabular}
{\caption{Probabilities of the Frauchiger-Renner scenario. We call the outcomes $+$ and $-$ of Ursula and Wigner respectively as $0$ and $1$. Section $11$ of the context $U\leftrightsquigarrow W$ does not have a global event. Like Hardy's model, it shows possibilistic contextuality.}\label{tab}}
\end{table}

Let's follow the sequence of trust presented in Ref. \cite{Nurgalieva_2019}:
\begin{equation}
A\rightsquigarrow B\rightsquigarrow U\rightsquigarrow W\rightsquigarrow A.
\end{equation}
If Ursula measures $\ket{-}$, then Bob must measure $\ket{1}$ since $p(10|U\leftrightsquigarrow B)=0$. Consequently, Alice must measure $\ket{1}$ since $p(01|A\leftrightsquigarrow B)=0$, and Wigner must measure $\ket{+}$ since $p(11|A\leftrightsquigarrow W)=0$. However, as shown in Table \ref{tab}, Wigner can measure $\ket{-}$ since $p(11|U\leftrightsquigarrow W)=\frac{1}{12}$, contradicting Ursula's conclusion of $p(11|U\leftrightsquigarrow W)=0$. This is the violation presented in Ref. \cite{Frauchiger_2018}. The assumptions in Ref. \cite{Frauchiger_2018} are as follows:
\begin{itemize}
\item (Q) All agents use quantum theory.
\item (C) Agents can use the results from another agent.
\item (S) A measurement by an agent has an output defined for that agent.
\end{itemize}
The assumptions are an informal description of the definition of a multi-agent scenario, with assumption (S) defining the local events of each agent, assumption (C) connecting these events in global events, and assumption (Q) saying the valuation will be calculated by quantum mechanics. Such use of colloquial language was avoided by the formal construction of multi-modal logic multi-agent scenarios, but the conclusion remains: something in the assumptions must be weakened to explain the paradoxical behavior.

The empirical model can be constructed directly from Table \ref{tab}, which is equivalent to the previous equivalence. The possible worlds are defined as the basis of the topology generated by the mutual knowledge $E_I$ and identified as the global events. The empirical model that results from the valuation is non-disturbing, as one can directly verify, and contextual\footnote{Using the noncontextual fraction \cite{Abramsky_2017}, one can find $NCF=\frac{5}{12}$.}. The possibilistic bundle diagram of Table \ref{tab} is given by Figure \ref{Bell}. The section $11$ of the context $U\leftrightsquigarrow W$ does not have a possibilistic global event, and by imposing Ursula's conclusion, $p(11|U\leftrightsquigarrow W)=0$, the induced possibilistic empirical model becomes noncontextual, showing that it is the cause of the possibilistic contextuality and equivalently the cause of the multi-agent paradox in the Frauchiger-Renner scenario. Thus we have the following conclusion:

\begin{result}
    Frauchiger-Renner scenario is mapped as an empirical model presenting logic contextuality, the result of its multi-agent paradox with quantum origin.
\end{result}

As mentioned earlier, from the Result \ref{Logical contextuality as}, we conclude that if we extend the possible worlds to encompass all local sections indexed by their contexts, the paradox also disappears. However, indexing leads to lambda-dependence, a non-classical property that ultimately embodies contextuality. This affirms a claim made in Ref. \cite{Nurgalieva_2019} stating that the inclusion of contexts as data in propositions avoids logical contradictions in the Frauchiger-Renner scenario. 

\begin{figure}[ht]
    \centering
    \scalebox{0.3}{\begin{tikzpicture}
\draw [ultra thick] (-9,12) -- (-3,10) node[right] {\Huge{$\ B$}};
\draw [ultra thick] (-3,10) -- (0,13) node[right] {\Huge{$\ U$}}; 
\draw [ultra thick] (0,13) -- (-6,15)
node[left] {\Huge{$W\ $}}; 
\draw [ultra thick] (-6,15) -- (-9,12) node[left] {\Huge{$A\ $}};
\filldraw [black] (-9,12) circle (4pt);
\filldraw [black] (-3,10) circle (4pt);
\filldraw [black] (0,13) circle (4pt);
\filldraw [black] (-6,15) circle (4pt);
\draw[loosely dotted, ultra thick] (-9,12) -- (-9,21);
\draw[loosely dotted, ultra thick] (-3,10) -- (-3,19);
\draw[loosely dotted, ultra thick] (0,13) -- (0,22);
\draw[loosely dotted, ultra thick] (-6,15) -- (-6,24);
\filldraw [black] (-9,21) circle (4pt);
\filldraw [black] (-3,19) circle (4pt);
\filldraw [black] (0,22) circle (4pt);
\filldraw [black] (-6,24) circle (4pt);
\filldraw [black] (-9,19) circle (4pt);
\filldraw [black] (-3,17) circle (4pt);
\filldraw [black] (0,20) circle (4pt);
\filldraw [black] (-6,22) circle (4pt);
\draw [ultra thick] (-3,17) -- (0,20) node[right] {\Huge{$\ 0$}}; 
\draw [ultra thick] (0,20) -- (-6,22); 
\draw [ultra thick] (-6,22) -- (-9,21);
\draw [ultra thick] (-9,21) -- (-3,17);
\draw [ultra thick] (-3,19) -- (0,22) node[right] {\Huge{$\ 1$}}; 
\draw [ultra thick] (-3,19) -- (0,20);
\draw [red, ultra thick] (0,22) -- (-6,24); 
\draw [ultra thick] (-6,24) -- (-9,19);
\draw [ultra thick] (-9,21) -- (-3,19);
\draw [ultra thick] (-9,19) -- (-3,17);
\draw [ultra thick] (-6,22) -- (-9,19);
\draw [ultra thick] (0,22) -- (-6,22); 
\draw [ultra thick] (0,20) -- (-6,24); 
\end{tikzpicture}}
    \caption{Possibilistic bundle of the Frauchiger-Renner scenario.}
    \label{Bell}
\end{figure}
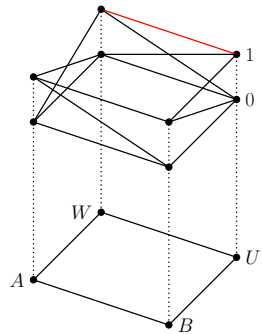

\subsection{Vilasini-Nurgalieva-del Rio scenario is strongly contextual}

Another example is the Vilasini-Nurgalieva-del Rio scenario \cite{Vilasini_2019}. It generalizes the conditions for multi-agent paradoxes in generalized probability theories using modal logic and explicitly constructs a paradox for the box world. The construction of the multi-agent scenario in Ref. \cite{Vilasini_2019} has equivalent assumptions to those described at the beginning of the section, in addition to requiring treatment of states, effects, and the channels that the trust relation defines between the agents, all using the formalism of Generalized Probabilistic Theories \cite{Janotta2014, Muller2021, Selby2021} applied in the box world defined with Popescu-Rohrlich boxes \cite{Popescu1994}. Here, we will limit ourselves to what is necessary for the calculation of valuations and the construction of the empirical model, leaving the original article as a reference for the detailed construction of the multi-agent scenario.

The structure of the agents, trust relation, and the possible worlds is identical to the one presented in the Frauchiger-Renner scenario. We have two systems $R$ and $S$, two friends Alice $A$ and Bob $B$, Wigner $W$ and Ursula $U$, with a symmetric trust relation given by $A\leftrightsquigarrow W$, $U\leftrightsquigarrow B$, $A\leftrightsquigarrow B$, and $U\leftrightsquigarrow W$, while the systems can be ignored due to their isomorphism with the friends. Each measurement will have two outcomes, defining the $2^4$ possible worlds given by $E_I$. 

The valuation follows the initial state given by the sharing of a Popescu-Rohrlich box between $R$ and $S$, thus satisfying $X_{i}X_{j}=x_{i}\oplus_{mod 2} x_{j}$ with $X_{i}$ measurements and $x_{i}$ outcomes. The authors of \cite{Vilasini_2019} show that all pairs of agents trusting each other can be understood as being correlated by Popescu-Rohrlich boxes. By using trustworthy and fixing the conditions $X_{U}=X_{A}\oplus_{mod2}1$, $X_{W}=X_{B}\oplus_{mod2}1$, the measurements $X_{A}=X_{B}=0$ and the outcomes $x_{i}\in\{0,1\}$, we can propagate the initial correlation between $R$ and $S$ to obtain the possibilistic values presented in Table \ref{tab2}. 

\begin{table}[ht]
  \begin{tabular}{|c|c|c|c|c|}
\hline
& \large{ $00$ } & \large{ $01$ } & \large{ $10$ } & \large{ $11$ } \\
\hline
\large{$A\leftrightsquigarrow B$} & \large{$1$} & \large{$0$} & \large{$0$} & \large{$1$} \\
\hline
\large{$A\leftrightsquigarrow W$} & \large{$1$} & \large{$0$} & \large{$0$} & \large{$1$} \\
\hline
\large{$U\leftrightsquigarrow W$} & \large{$0$} & \large{\color{red}$1$} & \large{\color{red}$1$} & \large{$0$} \\
\hline
\large{$U\leftrightsquigarrow B$} & \large{$1$} & \large{$0$} & \large{$0$} & \large{$1$} \\
\hline
\end{tabular}
{\caption{Prossibilities of the Vilasini-Nurgalieva-del Rio scenario. It defines the well-known Popescu-Rohrlich box empirical model, showing the Liar Cycle paradox with four agents. It is strong contextual once all local sections show violations, thus making it stronger than the previous example.}\label{tab2}}
\end{table}

As shown in Ref. \cite{Vilasini_2019}, all agents find a contradiction in any chosen sequence of agents, presenting a stronger violation than the one presented by the Frauchiger-Renner scenario. Using the same assumptions as the Frauchiger-Renner scenario, with the necessary modification that in (Q) the agents use the box world, at least one of them would need to be violated to explain the paradoxical behavior.

The identification with an empirical model follows the exact same construction to the one for the Frauchiger-Renner scenario, but now we are dealing with possibilistic values, thus allowing a faithful representation of Table \ref{tab2} as the bundle diagram in Figure \ref{PR}. It defines the well-known Popescu-Rohrlich box empirical model, showing the Liar Cycle paradox with four agents. Once all local sections show violations,thus making it stronger than the empirical model of the previous example, we have the following conclusion:

\begin{result}
    Vilasini-Nurgalieva-del Rio scenario is mapped as an empirical model known as Popescu-Rohrlich box empirical model, a main example of strong contextuality, the result of its multi-agent paradox with post-quantum origin.
\end{result}

Since they share the same measurement scenario, both the Frauchiger-Renner scenario and the Vilasini-Nurgalieva-del Rio scenario have the same set of possible worlds given by $\mathcal{B}_{D_I}$. Similarly to the previous example, we can use Result \ref{Logical contextuality as} to rectify the paradoxical behavior at the expense of lambda-dependence.

\begin{figure}[ht]
    \centering
    \scalebox{0.3}{\begin{tikzpicture}
\draw [ultra thick] (-9,12) -- (-3,10) node[right] {\Huge{$\ B$}};
\draw [ultra thick] (-3,10) -- (0,13) node[right] {\Huge{$\ U$}}; 
\draw [ultra thick] (0,13) -- (-6,15)
node[left] {\Huge{$W\ $}}; 
\draw [ultra thick] (-6,15) -- (-9,12) node[left] {\Huge{$A\ $}};
\filldraw [black] (-9,12) circle (4pt);
\filldraw [black] (-3,10) circle (4pt);
\filldraw [black] (0,13) circle (4pt);
\filldraw [black] (-6,15) circle (4pt);
\draw[loosely dotted, ultra thick] (-9,12) -- (-9,21);
\draw[loosely dotted, ultra thick] (-3,10) -- (-3,19);
\draw[loosely dotted, ultra thick] (0,13) -- (0,22);
\draw[loosely dotted, ultra thick] (-6,15) -- (-6,24);
\filldraw [black] (-9,21) circle (4pt);
\filldraw [black] (-3,19) circle (4pt);
\filldraw [black] (0,22) circle (4pt);
\filldraw [black] (-6,24) circle (4pt);
\filldraw [black] (-9,19) circle (4pt);
\filldraw [black] (-3,17) circle (4pt);
\filldraw [black] (0,20) circle (4pt);
\filldraw [black] (-6,22) circle (4pt);
\draw [ultra thick] (-3,17) -- (0,20) node[right] {\Huge{$\ 0$}}; 
\draw [ultra thick] (-3,19) -- (0,22) node[right] {\Huge{$\ 1$}};
\draw [ultra thick] (-9,21) -- (-3,19);
\draw [ultra thick] (-9,19) -- (-3,17);
\draw [ultra thick] (-6,22) -- (-9,19);
\draw [ultra thick] (-6,24) -- (-9,21);
\draw [red, ultra thick] (0,22) -- (-6,22); 
\draw [red, ultra thick] (0,20) -- (-6,24); 
\end{tikzpicture}}
    \caption{Possibilistic bundle of the Vilasini-Nurgalieva-del Rio scenario.}
    \label{PR}
\end{figure}
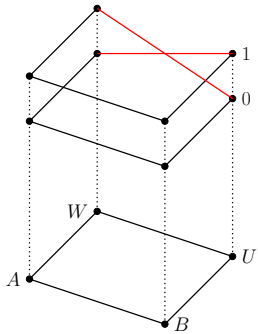

\section{Discussion}
\label{Discussion}

Multi-agent paradoxes and the phenomenon of contextuality are examples of systems with local, marginal access to a given global whole that cannot be adequately explained in a classical manner. Alfred Korzybski's statement, ``a map is not the territory'' \cite{korzybski1933science}, serves as a reminder that even when dealing with the classical world, we need to keep in mind that we do not have the capacity to discern all the details of the ``territory''; we are only capable of constructing ``maps.'' An agent's knowledge only extracts parts of a ``map'' of the ``territory'' we call physical reality. This aligns with Refs. \cite{Del_Santo_2019, Gisin_2019, Gisin_2021} regarding the ontological assumptions made in physical theories when using mathematical objects that do not align with a purely empiricist viewpoint and how such assumptions can be understood as the hidden variables of classical theory. The non-classical world imposes an even greater limitation on us by isolating us in classical islands, which, much like the concept of charts in the theory of manifolds, can only be connected through an atlas but lose something essential that a map of a single chart possesses. In a non-classical world, we not only have to remember that a map is not the territory, but also that the map is merely a chart in an atlas. There is something beyond what we can individually perceive with our classically limited perspective, making Korzybski's statement even more imperative.

The main point we obtained was that multi-modal logic can be used to handle non-classical scenarios, provided that due care is taken with the agents' knowledge. During our journey to build the map between empirical models and multi-agent scenarios, we encountered other results. Here are the results we have obtained in this paper:
\begin{itemize}
    \item We generalized the concept of trust to also apply to sets of agents.
    \item We have identified a construction of a fundamental truth from the trust relation between agents, which is used in the literature to weaken the Truth Axiom, thus recovering the Truth Axiom from trust.
    \item We have translated the components of an empirical model (measurements, contexts, events) into the components of a multi-agent scenario (measurement by an agent, equivalence trust relation, elements of the basis topology of a knowledge operator), exposing the limitation of such a mapping to describe a generic multi-agent scenario as an empirical model.
    \item We have shown that contextuality is the violation of soundness, precisely the violation that causes multi-agent paradoxes.
    \item We have demonstrated that contextuality only appears because of the imposition of the mutual knowledge operator as the generator of events, imposing exactly the classicality captured by noncontextuality.
    \item By allowing the operator that generates events to be that of distributed knowledge, we have shown that we recover soundness at the cost of lambda-dependence, proving a generalized version of the conjecture in Ref. \cite{Nurgalieva_2019} that says the inclusion of contexts as data of propositions avoids logical contradictions.
    \item We have translated the three main examples of multi-agent scenarios, Wigner's Friend, Frauchiger-Renner, and Vilasini-Nurgalieva-del Rio scenarios, into empirical models and identified their types of contextuality that generate their multi-agent paradoxes.
\end{itemize}

These results enable the creation of new multi-agent paradoxes and the application of mathematical tools from the Sheaf Approach to contextuality in these scenarios. The discussed examples in section \ref{Multi-agent scenarios} illustrate how the map can identify paradoxes in a clear manner, with the possibility of even quantifying them using tools already known in the literature on contextuality. It's interesting to highlight the types of contextuality that emerge from the explored example scenarios. In the Wigner's Friend scenario, no contextuality appears, clearly indicating its non-empirical nature. On the other hand, the Frauchiger-Renner scenario exhibits logical contextuality, demonstrating its already noted similarity to the well-known Hardy's paradox. The Vilasini-Nurgalieva-del Rio scenario demonstrates strong contextuality, as expected since it was constructed with Popescu-Rohrlich boxes. The results also facilitate the translation of contextual empirical models into multi-agent scenarios with probabilistic paradoxes, and possibilistic contextual empirical models into logical multi-agent paradoxes.

The examples show us that paradoxes and contextuality are the same phenomenon, at least in scenarios that accept the inverse map, and through their valuations, it is evident that there is more knowledge than the mutual one. They point to the argument we use to construct events, which is that we cannot define the worlds our logic will work out, but by the knowledge we can explore and refine the worlds we can have access to in a more empiricist and relational sense. Distributed knowledge is the finest way to understand what is happening, as it encodes all the data in the propositions, safeguarding our classical logic from the non-classical nature of phenomena. It also demonstrates that we have more data than classical mutual knowledge, more possible worlds, and, as we can observe today with quantum technology, more resources to explore. 

An immediate path for future research would be to seek generalizations of the Sheaf Approach, as explored in Ref. \cite{Gogioso_2021, gogioso2023topology, abramsky2023combining}, in order to expand the scope of application of the map constructed here. The ultimate goal is to enable the representation of every multi-agent scenario with empirical models, allowing the use of sheaf theory tools to investigate their non-classicality. Another avenue is to leverage the relationship that the Sheaf Approach has with other approaches to the phenomenon of contextuality, as organized in Ref. \cite{masse2021problem} and references therein, to describe multi-agent paradoxes in these languages. In particular, utilizing their different domains of application to further extend the map constructed here. Both of these aforementioned paths would be highly valuable for formalizing the language of protocols that present extended versions of the Wigner's friend scenario, especially those with a causal structure \cite{schmid2023review}. While an extension of the map between empirical models and multi-agent scenarios is not yet constructed, it would be interesting to identify and develop examples that allow analysis through the standard Sheaf Approach, with ideal candidates already existing in Ref. \cite{Leegwater2022, schmid2023review, walleghem2023extended}. Furthermore, the use of multi-modal logic and the construction done here using topological semantics and pointless topology may shed light on the formalization of interpretations of quantum theory, such as the relational interpretation in Ref. \cite{sep-qm-relational, Lawrence2023relativefactsof}.

\begin{acknowledgments}
The author thanks the MathFoundQ – UNICAMP – Mathematical Foundations of Quantum Theory Group, in special Marcelo Terra Cunha, for the conversations in the preparation of this manuscript, and Vinicius Pretti Rossi for conversation about Wigner's Friend scenarios. Special thanks to Rafael Wagner, Alisson Tezzin, and Rafael Rabelo for their comments and discussions on the final version of the manuscript and its results.

This study was financed in part by the Coordenação de Aperfeiçoamento de Pessoal de Nível Superior - Brasil (CAPES) - Finance Code 001.
\end{acknowledgments}

\appendix

\section{Modal Logic}
\label{Modal Logic}

A modal logic is defined with a set $\Omega$ of propositional variables and the usual set of connectives $\neg$ (``not''), $\wedge$ (``and''), $\vee$ (``or''), $\leftrightarrow$ (``if and only if''), $\to$ (``if . . . then''), besides the use of parentheses.

In addition to the usual connectives, a modal logic has a modal operator called ``possibility'' $\Diamond$. When combined with $\neg$ one can define the modal operator ``necessity'' $\Box$ as $\neg\Diamond\neg$.\footnote{One can also start with $\Box$ and to define $\Diamond$ as $\neg\Box\neg$, but certain care must be taken when defining the modal operator in this dual manner\cite{sep-phil-multimodallogic}.} 

When dealing with a set of agents indexed by a finite set $I\ni i$, one can define $\Diamond_i$ (and consequently $\Box_i$) as the necessity modal operator from the point of view of agent $i$. This defines a multi-modal logic, with one modal logic for each agent, but all of them agreeing on the usual propositional logic structure.

Once the set of propositional variables $\Omega$ and symbols are defined, one can define the formulas as follows:
\begin{itemize}
\item All the propositional variables are formulas.
\item If $A$ is a formula, then $\neg A$, $\Diamond A$, and $\Box A$ are formulas.
\item If $A$ and $B$ are formulas, then $(A\wedge B)$, $(A\vee B)$, $(A\leftrightarrow B)$, and $(A\to B)$ are also formulas.
\item There are no other formulas.
\end{itemize}
The collection of propositions $\Phi$ is defined by the possible formulas.

\subsection{Kripke Semantics of multi-modal logic}

A Kripke frame $\left<\Sigma,R\right>$ is a pair consisting of a non-empty set of states or worlds $\Sigma$ and a binary relation $R$ on $\Sigma$, called the accessibility relation, such that $aRb$ means ``$b$ is possible given $a$'' or ``$b$ is accessible by $a$''.

A relational structure $\left<\Sigma,\{R_{i}\}_{i\in I}\right>$ is a finite set of Kripke frames with the same $\Sigma$, where each $R_{i}$ is given by an agent $i$. In other words, $aR_{i}b$ is understood as ``$b$ is possible given $a$ in the point of view of agent $i$'' or ``$b$ is accessible by $a$ in the point of view of agent $i$''.

A Kripke structure $M=\left<\Sigma,\{R_{i}\}_{i\in I}, \nu\right>$ is a relational structure $\left<\Sigma,\{R_{i}\}_{i\in I}\right>$ equipped with a Boolean valuation $\nu:\Omega\to\mathscr{P}(\Sigma)$, with $\mathcal{P}(\Sigma)$ the power set of $\Sigma$, that indicate the worlds where a proposition variable is true: given $A\in\Omega$, $\nu(A)\in\mathcal{P}(\Sigma)$ is the set of worlds where $A$ is true. The valuation of a generic proposition in $\Phi$ obeys the ordinary rules of propositional logic for each world, plus rules to the modal operators as we will see.


\subsection{Rules, soundness and completeness}

Given a Kripke structure $M=\left<\Sigma,\{R_{i}\}_{i\in I}, \nu\right>$ with possible worlds $w\in\Sigma$ we define $M,w\vDash\phi$ as the proposition $\phi$ being true for the world $w\in\Sigma$ of the Kripke structure $M$. Equivalently, we write $M,w\vDash\phi$ if $w\in\nu(\phi)$ in the Kripke structure $M$.

Sentences, also known as closed formulas, are formulas without free variables. Let $Q$ be a sentence. The symbol $Q\vDash \phi$, where $\phi\in\Phi$, can be read as ``$Q$ semantically entails $\phi$'', meaning that if $Q$ is true, then $\phi$ is also true. We can have a finite set of sentences $Q_1,...,Q_n$ semantically entailing $\phi$, $Q_1,...,Q_n\vDash\phi$, which reads as ``if the sentences $Q_1,...,Q_n$ are true, then $\phi$ is true.'' 

Another symbol $\vdash$ can be read as ``$Q$ syntactically entails $\phi$'', meaning $Q$ proves $\phi$. Again, we can have a finite set of sentences $Q_1,...,Q_n$ syntactically entailing $\phi$, in symbols $Q_1,...,Q_n\vdash\phi$, which is read as ``the sentences $Q_1,...,Q_n$ prove $\phi$.'' 

The ordinary rules of propositional logic hold here for each world, and additional rules for the modal operators in Kripke semantics are added:
\begin{itemize}
    \item $(M,w\vDash\Box\phi)\leftrightarrow\forall u(wRu\to(M,u\vDash\phi))$.
    \item $(M,w\vDash\Diamond\phi)\leftrightarrow\exists u(wRu\wedge(M,u\vDash\phi))$.
\end{itemize}
A system satisfies completeness (also called semantic completeness) if $Q\vDash\phi$ implies $Q\vdash\phi$, and a system satisfies soundness if $Q\vdash\phi$ implies $Q\vDash\phi$.


\subsection{Knowledge operators}
\label{Knowledge}

The valuation $\nu$ being unique for all agents reflects the philosophical statement that truth is independent of any agent; it is absolute. This can be understood as a strong axiom to determine the distinction between knowledge and belief, with the former being a direct consequence of truth and the latter not needing any relation to it. This definition of knowledge is Plato's ``justified true belief.'' However, as one can readily see, different agents have different knowledge, which is a coarse-graining of the fundamental truth. Therefore, for multi-agent scenarios, we must use the knowledge of each agent to valuate propositions.

One can define, for an agent, the basic modal operator of epistemic logic $K$ that means ``it is known that''. Let $R(w)=\{u|wRu\}$, and for $A\subseteq\Sigma$ denote $M,A\vDash\phi$ as $M,u\vDash\phi$ for all $u\in A$. Them, in Kripke semantics, one add a new rule to define knowledge:
\begin{itemize}
    \item $(M,w\vDash K\phi)\leftrightarrow(M,R(w)\vDash \phi)$.
\end{itemize}
In the case of multiple agents indexed by a set $I$, one can define an operator $K_i$ for each agent $i$, where $K_i\phi$ can be read as ``agent $i$ knows that $\phi$''. We need to add a new item to the list of formulas:
\begin{itemize}
    \item If $A$ is a formula, then $K_iA$ for all $i\in I$ is a formula.
\end{itemize}

To preserve the truth by the knowledge operators, one imposes the Knowledge generalization rule, also known as $\mathbf{N}$ and Necessitation Rule, that says for a Kripke structure $M$ and any $\phi\in\Phi$, we have
\begin{equation}
(M,w\vDash\phi) \forall w\to (M,w\vDash K_i\phi)\forall i.
\end{equation}
This rule can be written as well for modal operators,
\begin{equation}
(M,w\vDash\phi) \forall w\to (M,w\vDash \Box\phi).
\end{equation}

There are two more modal operators dealing with knowledge of a subset of agents $U\subset I$ that are interesting to us. Mutual or common knowledge $E_G$ means ``every agent in $G$ knows''. Formally, for all $\phi$, we define the mutual knowledge operator as follows:
\begin{equation}
E_G\phi=\bigwedge_{i\in U}K_{i}\phi,
\end{equation} 
which defines a relation
\begin{equation}
R_{E_{G}}=\bigcup_{i\in G}R_i 
\end{equation} 
that allows the addition of the following rule in the Kripke semantics:
\begin{itemize}
    \item $(M,w\vDash E_G\phi)\leftrightarrow(M,R_{E_G}(w)\vDash \phi)$.
\end{itemize}
Distributed knowledge $D_G$ means ``it is distributed knowledge to the whole $U$'', not just describing the knowledge of individual agents but all knowledge combined of $U$ as an entity itself. Formally, for all $\phi$, we define mutual knowledge operator as follows:
\begin{equation}
D_G\phi=\bigvee_{i\in U}K_{i}\phi,
\end{equation}
which defines a relation
\begin{equation}
R_{D_{G}}=\bigcap_{i\in G}R_i
\end{equation}
that allows the addition of the following rule in the Kripke semantics:
\begin{itemize}
    \item $(M,w\vDash D_G\phi)\leftrightarrow(M,R_{D_G}(w)\vDash \phi)$.
\end{itemize}

\subsection{Axioms of system \textbf{S4}}

Different axioms can be imposed on the accessibility relation of a frame (Frame Conditions) that equivalently\footnote{They follow from the preservation of such properties on the accessible worlds of each world.} result in properties of modal (Modal Axioms) and knowledge (Axioms of Knowledge) operators, thus defining different systems of modal logic\cite{sep-logic-modal,sep-logic-epistemic}.

\begin{axiom}[Distribution Axiom or $\mathbf{K}$]
It holds true for any frame. For modal operators, we have that for any $\psi,\phi\in\Phi$ it holds that
\begin{equation}
(\Box(\psi\to\phi))\to(\Box\psi\to \Box\phi)
\end{equation}
while for knowledge operators, for any $\psi,\phi\in\Phi$, we have 
\begin{equation}
(K_{i}\phi\wedge K_{i}(\phi\to\psi ))\to K_{i}\psi.
\end{equation}
\end{axiom}

System $\mathbf{K}$ is the simplest kind of logic described by Kripke semantics and establishes modus ponens for each world. An equivalent way to write it as a Modal Axiom is
\begin{equation}
\Box(\phi\wedge (\phi\to\psi ))\to \Box\psi,
\end{equation}
in a similar format to the respective Axiom of Knowledge. Normal Modal System is defined as a system $\mathbf{K}$ satisfying Rule $\mathbf{N}$.

\begin{axiom}[Truth Axiom, or $\mathbf{T}$, or $\mathbf{M}$]
For any frame and $\phi\in\Phi$:
\begin{itemize}
    \item (Frame Condition) The accessibility relation is reflexive.
    \item (Modal Axiom) $\Box\phi\to\phi.$
    \item (Axiom of Knowledge) $K_i\phi\to\phi.$
\end{itemize}
\end{axiom}

As a result of this axiom, one can show that $\phi\to\Diamond\phi$ holds. System $\mathbf{T}$ (also known as System $\mathbf{M}$) is defined as a System $\mathbf{K}$ satisfying the Truth Axiom.

\begin{axiom}[Positive Introspection Axiom or $\mathbf{4}$]
For any frame and $\phi\in\Phi$:
\begin{itemize}
    \item (Frame Condition) Accessibility relation is transitive.
    \item (Modal Axiom) $\Box\phi\to\Box\Box\phi.$
    \item (Axiom of Knowledge) $K_{i}\phi\to K_{i}K_{i}\phi.$
\end{itemize}
\end{axiom}

A result of this Axiom is that holds $\Diamond\Diamond\phi\to\Diamond\phi$. System $\mathbf{S4}$ is defined as a System $\mathbf{T}$ satisfying Axiom $\mathbf{4}$.\footnote{Another important axiom known as Negative Introspection Axiom or $\mathbf{5}$ is the imposition of symmetry of the accessibility relation, resulting for any $\phi\in\Phi$ the validity of $\neg K_{i}\phi\implies K_{i}\neg K_{i}\phi$ for knowledge operators and $\Diamond\phi\to\Box\Diamond\phi$ for modal operators. System $\mathbf{S5}$ is defined as a System $\mathbf{S4}$ satisfying Axiom $\mathbf{5}$, and is exactly the system where the accessibility relation is an equivalence relation. Usually one drops $\mathbf{5}$ once when an agent does not know something, it is hard to such agent judge its own lack of knowledge.}

\subsection{Topological semantics of multi-modal logic}

A natural semantics for the system $\mathbf{S4}$ is the topological \cite{awodey_kishida_2008,EConiglio2017-ECOMLS,https://doi.org/10.48550/arxiv.math/0703106,Awodey2012-AWOTCO-2}.

\begin{definition}
A topological model is a pair $(T,\nu)$ where $T=(X,\tau)$ is a topological space and a function $\nu:\Phi\to\mathscr{P}(X)$, called interpretation, that satisfies for any $\phi,\psi\in\Phi$
\begin{equation*}
    \begin{split}
        \nu(\phi\wedge\psi)&=\nu(\phi)\cap \nu(\psi)\\
        \nu(\phi\vee\psi)&=\nu(\phi)\cup \nu(\psi)\\
        \nu(\neg\phi)&=\nu(\phi)^{\complement}\\
        \nu(\Box\phi)&=\interior{\nu(\phi)}\\
        \nu(\Diamond\phi)&=\closure{\nu(\phi)},
    \end{split}
\end{equation*}
with $\interior{A}$, $\closure{A}$ and $A^{\complement}$ respectively the topological interior, closure and complement of $A\in\mathscr{P}(X)$.
\end{definition}

The elements of $X$ are the worlds, and $\nu$ can be understood as the valuation in the topological semantics, by giving the set of worlds where a formula is true, $M,w\vDash\phi$ if and only if $w\in\nu(\phi)$. Also, for any two formulas $\phi,\psi\in\Phi$ one can prove $\phi\vdash\psi$ if and only if $\nu(\phi)\subseteq\nu(\psi)$. One can show that this semantics imposes system $\mathbf{S4}$ to the logic. In this sense, the system $\mathbf{S4}$ is said to be the logic of topological spaces. 

An Alexandrov topological space is a topological space where every point of the space has a minimal neighborhood. Alexandrov topologies can also be defined as topological spaces where arbitrary intersections of open sets are open sets. In particular, any finite topology, i.e., only finitely many open sets, is an Alexandrov topology. A well-known result is the equivalence between Kripke and topological semantics for Alexandrov topological spaces \cite{10.2307/1969080}:

\begin{theorem}
For any Alexandrov topological space $(X,\tau)$ there exists a binary relation $R$ such that for any Boolean valuation $\nu$ and for any formula $\phi\in\Phi$, $(X,\tau,\nu),w\vDash\phi$ if and only $(X,R,\nu),w\vDash\phi$. For any transitive reflexive frame $(\Sigma,R)$, equivalently a frame satisfying $\mathbf{S4}$, there exists a topology $\tau$ on $\Sigma$ such that for any valuation $\nu$ and for any formula $\phi\in\Phi$, $(\Sigma,R,\nu),w\vDash\phi$ if and only $(\Sigma,\tau,\nu),w\vDash\phi$.
\end{theorem}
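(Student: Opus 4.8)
\emph{Proof plan.}\ The plan is to give explicit translations in both directions and then prove, by induction on formula complexity, that they preserve satisfaction pointwise for one and the same Boolean valuation $\nu$.

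\emph{From a topology to a frame.}\ Given an Alexandrov space $(X,\tau)$, the defining property supplies for each $w\in X$ a least open neighbourhood $U_w:=\bigcap\{U\in\tau: w\in U\}$. Define $wRu$ iff $u\in U_w$. Then $R$ is reflexive ($w\in U_w$) and transitive: if $u\in U_w$ then $U_w$ is an open neighbourhood of $u$, so $U_u\subseteq U_w$, whence $v\in U_u$ implies $v\in U_w$. Thus $(X,R)$ is a transitive reflexive frame, i.e.\ an $\mathbf{S4}$ frame. The crucial lemma is that the topological interior is computed locally, $\interior{A}=\{w: U_w\subseteq A\}=\{w: R(w)\subseteq A\}$, and this is exactly the point at which the Alexandrov hypothesis is indispensable. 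Dually, $\closure{A}=\{w: R(w)\cap A\neq\varnothing\}$.

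\emph{From a frame to a topology.}\ Given a transitive reflexive $(\Sigma,R)$, declare $U\subseteq\Sigma$ open iff it is $R$-upward closed (if $w\in U$ and $wRu$ then $u\in U$). Arbitrary unions and arbitrary intersections of up-sets are up-sets, so $\tau$ is an Alexandrov topology; by transitivity $R(w)=\{u: wRu\}$ is open, and by reflexivity it is the least open set containing $w$. Hence again $\interior{A}=\{w: R(w)\subseteq A\}$.

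\emph{The induction.}\ Fix a Boolean valuation $\nu$ and show $(X,\tau,\nu),w\vDash\phi$ iff $(X,R,\nu),w\vDash\phi$ by induction on $\phi$ (and symmetrically in the other direction of the statement). For propositional variables this is immediate, since $\nu$ is the same; the Boolean connectives follow from the matching clauses ($\cap,\cup$, complement on the topological side, the truth-functional clauses on the Kripke side). For the modal case $\Box\psi$ one has $(X,\tau,\nu),w\vDash\Box\psi$ iff $w\in\interior{\nu(\psi)}$ iff $R(w)\subseteq\nu(\psi)$ (by the locality lemma) iff every $u$ with $wRu$ satisfies $\psi$ topologically iff, by the induction hypothesis, every such $u$ satisfies $\psi$ relationally iff $(X,R,\nu),w\vDash\Box\psi$; the case $\Diamond\psi$ is dual, or is reduced to $\Box$ via $\Diamond=\neg\Box\neg$, legitimate since both semantics validate this duality in $\mathbf{S4}$.

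\emph{Main obstacle.}\ There is no single deep step; the one thing to get exactly right is that the Alexandrov assumption is used precisely where needed, namely to upgrade the interior operator from the global condition ``some open set lies between $w$ and $A$'' to the pointwise condition $R(w)\subseteq A$, which is what lets a single accessibility relation reproduce the topological semantics at every world. (For non-Alexandrov spaces this pointwise equivalence fails, and one only recovers completeness of $\mathbf{S4}$ over topological spaces through the much subtler McKinsey--Tarski argument.) Verifying that $R$ is genuinely an $\mathbf{S4}$ frame, that the up-set topology is Alexandrov, and running the induction are then routine.
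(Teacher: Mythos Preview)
Your proof is correct and follows essentially the same route as the paper's sketch: the paper does not give a full proof but merely cites the result and notes that it ``follows from the identification of the accessibility relation $R$ with the specialization pre-order $\leq$'' and the corresponding basis of minimal neighbourhoods $U^x$, which is exactly your construction of $U_w$ and the up-set topology. Your argument is more detailed (in particular you carry out the induction on formulas and isolate the role of the Alexandrov hypothesis in the interior lemma), but the underlying translation is the same.
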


This result \cite{10.1007/978-3-540-72734-7_12} follows from the identification of the accessibility relation $R$ with the specialization pre-order $\leq$, $x\leq y$ if and only if $\forall U\in\tau$ we have $(x\in U)\to (y\in U)$, which turns $(X,\leq)$ into a poset. Such a relation defines a topology generated by the basis of open sets $U^x=\{y|y\leq x\}$, which is an equivalent definition of the Alexandrov topology\footnote{This is the upper Alexandrov topology, and one can think of it as defining open sets as generated by the causal past cones of points. The lower Alexandrov topology, with the basis $U_y=\{x|y\leq x\}$, is given by the future causal cones \cite{2009.12646}.}, and any Alexandrov topology has such natural pre-order that defines the semantics satisfying $\mathbf{S4}$. Also, the system $\mathbf{S4}$ satisfies completeness and soundness in relation to the topological semantics of Alexandrov topological spaces.

\section{Sheaf approach to contextuality}
\label{Sheaf approach}

Contextuality is, informally, the property of a physical system that cannot be explained classically, where this classicality is thought of as an ontological reality that is coarse-grained to the system\footnote{See Ref. \cite{budroni2022quantum,masse2021problem} for a general revision of contextuality.}. The Sheaf approach, which chooses as the fundamental objects the measurements and their joint measurability, codifies the system as a category. The set of outcomes of each measurement and the distribution on it are codified by functors. Contextuality will appear as a property of the functor over a given system. 

\subsection{Measurements}

As in Ref. \cite{Terra_2019_FibreBundle} we will define measurements in a generic manner by its results, also called events. For simplicity we will keep all the sets with finite elements.

\begin{definition}
    A measurement $M$ is a set of labels $\{s^i\}_{i=1}^n$ for the possible $n$ events. We will denote by $M[s^i]$ the event $s^i$ of the measurement $M$.
\end{definition}

The formal definition of a measurement in a physical theory depends on the theory in which it is constructed. A well-known example of measurement is given by a positive operator-valued measure in quantum theory. Its events are given by its outcomes, the effects that sum to the identity.

\subsection{Compatibility between measurements}

The measurements are organized as a covering through compatibility, or joint measurability. One can understand compatibility as classicality at the ontological level. It is a property between measurements and their respective outcomes given by the theory, not dealing with distributions over outcomes. It is a condition of classicality stronger than the concept of noncontextuality, which is epistemic in nature and arises from classicality at the level of distributions over outcomes, as we will see later with the sheaf approach. Compatibility imposes the existence of a ``mother'' measurement, such that our accessible measurements have origin by classical post-processing\footnote{Both notions are equivalent, see Ref. \cite{Filippov_2017}.}.

\begin{definition}
Let $\{M_{k}\}_{k=1}^m$ be a set of measurements with a respective set of events $O^{(k)}=\{s^{(k)}\}$, they are jointly measurable if there exists a measurement $G$ with set of events $O^{(1)}\times ...\times O^{(m)}$ satisfying
\begin{equation}
    M_{k}[s^{(k)}]=\sum_{s^{(j)}:j\neq k}G[s^{(1)},...,s^{(k)},...,s^{(m)}]
\end{equation}
for all $k$. 
\end{definition}

Therefore, a set of compatible measurements allows the existence of a measurement that can recover the original measurements when appropriately marginalized. As shown in Ref. \cite{Heinosaari_2010}, in quantum theory commuting implies jointly measurable, and the inverse holds if the measurements are sharp. In this paper all the quantum measurements are sharp.

\subsection{Measurement scenario}

To the covering of measurements and the possible events, we give the name measurement scenario \cite{Abramsky_2018}.

\begin{definition}
A measurement scenario $\left<X,\mathcal{U},(O_{x})_{x\in X}\right>$ is a hypergraph\footnote{Usually one imposes the hypergraph has some additional structure, usually enough to identify it as a simplicial complex. See Ref. \cite{montanhano2021contextuality} for a justified construction of the measurement scenario.} $\left<X,\mathcal{U}\right>$, where $X$ is the set of measurements and $\mathcal{U}$ a covering of contexts (a family of sets of compatible measurements), plus the sets $(O_{x})_{x\in X}$ for each  $x\in X$ are called outcome sets, with their elements the possible events of each measurement.
\end{definition}

For simplicity let's suppose that outcome sets are finite, and therefore one can define an outcome set $O$ for all the measurements $x$\footnote{We can codify any $O_{x}$ through an injective function $f_{x}:O_{x}\to O$; we just need to ignore elements that aren't in the image of $f_{x}$, such that these elements aren't in the distribution's support.}. We will also work with a measurement scenario with a simplicial complex structure of contexts.

\subsection{Presheaves and sheaves}

\begin{definition}
A presheaf is a functor $F:C^{op}\to \textbf{Set}$ of a category $C$ to the category of sets. 
\end{definition}

Let $(C,J)$ be a site, a small category $C$ equipped with a coverage $J$. In other words, any object $U\in C$ admits a collection of families of morphisms $\{f_{i}:U_{i}\to U\}_{i\in I}$ called covering families. 

\begin{definition}
A presheaf on $(C,J)$ is a sheaf if it satisfies the following axioms
\begin{itemize}
    \item Gluing: if for all $i\in I$ we have $s_{i}\in F(U_{i})$ such that $s_{i}|_{U_{i}\cap U_{j}}=s_{j}|_{U_{i}\cap U_{j}}$, then there is $s\in F(U)$ satisfying $s_{i}=s|_{U_{i}}$;
    \item Locality: if $s,t\in F(U)$ such that $s|_{U_{i}}=t|_{U_{i}}$ for all $U_{i}$, then $s=t$.
\end{itemize}
\end{definition}

\begin{definition}
Elements $s\in F(U)$ of the image of a presheaf are called local sections if $U\neq X$, and global sections if $U=X$.
\end{definition}

\begin{definition}
A compatible family is a family of sections $\left\{s_{i}\in F(U_{i})\right\}_{i\in I}$ such that for all $j,k\in I$ holds
$s_{j}|_{U_{j}\cap U_{k}}=s_{k}|_{U_{j}\cap U_{k}}$ in $F(U_{j}\cap U_{k})$.
\end{definition}

\subsection{Sheaf of events}

The covering $\mathcal{U}$ can be restricted to maximal contexts, which will also be denoted by $\mathcal{U}$, so that $\left<X,\mathcal{U}\right>$ can be understood as a set $X$ with a covering $\mathcal{U}$ of maximal contexts $U_{j\in I}$ indexed by an ordered set $I$\footnote{Given a covering, one can construct a locale, a pointless space, using unions and intersections. This means that the measurements are not the fundamental objects, but rather the minimal contexts become the effective measurements of the scenario, depending on how one chooses the covering. A physical example of refinement is spin degeneration, where refinement occurs by applying a suitable magnetic field.}. Since the intersection of contexts is a context, we can define the inclusion morphism $\rho(jk,j): U_{j}\cap U_{k}\to U_{j}$, which turns the set of contexts and the inclusion morphisms into a small category\footnote{From Ref. \cite{Johnstone:592033}, we can see that the category of contexts with the inclusion is a site.}.

\begin{definition}
The outcome sets are defined by a functor $\mathcal{E}:\left<X,\mathcal{U}\right>^{op}\to\textbf{Set}$, with $\mathcal{E}::U\mapsto O^{U}=\bigtimes_{x\in U}O_{x}$ and $\mathcal{E}::\rho\mapsto\rho'$, such that for each element $U\in\mathcal{U}$ we have an outcome set $O^{U}$ of the context and $\rho'$ is the restriction to the outcome sets, $\rho'(j,kj):O^{j}\to O^{kj}=\mathcal{E}(U_{j}\cap U_{k})::s_{j}\mapsto s_{j}|_{kj}$.
\end{definition}

\begin{proposition}
The functor $\mathcal{E}$ is a sheaf in the site of measurements and contexts, called the sheaf of events of a given measurement scenario.
\end{proposition}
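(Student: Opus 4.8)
The plan is to verify directly that the functor $\mathcal{E}:\left<X,\mathcal{U}\right>^{op}\to\textbf{Set}$ defined by $\mathcal{E}(U)=\bigtimes_{x\in U}O_{x}$, with restriction maps given by coordinate projections, satisfies the two sheaf axioms (Gluing and Locality) with respect to the coverage on the site of contexts. Since the outcome sets are products indexed by the measurements in each context, a section over $U$ is just a function $s:U\to O$ (with $s(x)\in O_x$), and restriction along an inclusion $\rho(jk,j):U_j\cap U_k\hookrightarrow U_j$ is honest function restriction $s_j\mapsto s_j|_{U_j\cap U_k}$. This concrete description is what makes the argument essentially bookkeeping.

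First I would fix a context $U\in\mathcal{U}$ together with a covering family $\{U_i\hookrightarrow U\}_{i\in I}$, so that $\bigcup_i U_i = U$ as sets of measurements (this is exactly what ``covering'' means at the level of the hypergraph, since intersections of contexts are contexts and the covering restricts to maximal contexts). For \emph{Locality}, suppose $s,t\in\mathcal{E}(U)$ with $s|_{U_i}=t|_{U_i}$ for all $i$; since every $x\in U$ lies in some $U_i$, we get $s(x)=t(x)$ for all $x\in U$, hence $s=t$. For \emph{Gluing}, suppose we are given $s_i\in\mathcal{E}(U_i)$ that agree on overlaps, $s_i|_{U_i\cap U_j}=s_j|_{U_i\cap U_j}$ for all $i,j$. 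Define $s:U\to O$ by $s(x)=s_i(x)$ for any $i$ with $x\in U_i$; the compatibility hypothesis ensures this is well-defined (if $x\in U_i\cap U_j$ then $s_i(x)=s_j(x)$), and $s(x)=s_i(x)\in O_x$, so $s\in\mathcal{E}(U)$ and visibly $s|_{U_i}=s_i$. Functoriality of $\mathcal{E}$ (identities to identities, composites to composites) is immediate because projections compose to projections, and contravariance is built into the direction of the restriction maps.

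The main thing to be careful about — the closest thing to an obstacle — is matching the abstract sheaf axioms stated for a general site $(C,J)$ to this particular site, namely checking that the coverage $J$ on the category of contexts is the one for which ``$\{U_i\to U\}$ covers'' translates to $\bigcup_i U_i = U$, and that pullbacks $U_i\cap U_j$ exist in the category (they do, being intersections of contexts, which are again contexts by the downward-closure condition $C'\subset C\in\mathcal{M}\Rightarrow C'\in\mathcal{M}$). Once this identification is in place, the verification above goes through verbatim. I would also remark that this is the standard fact that a product-of-sets functor on a poset of subsets, with projection restrictions, is a sheaf for the ``union'' coverage — the content here is purely that the measurement-scenario data assembles into such a site.
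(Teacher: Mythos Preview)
Your proof is correct. The paper, however, states this proposition without proof, treating it as a standard fact from the sheaf-approach literature (it is essentially the observation that the assignment $U\mapsto\prod_{x\in U}O_x$ with coordinate projections as restrictions is a sheaf for the union coverage on a poset of subsets). Your verification of Locality and Gluing via the identification of sections with functions $s:U\to O$, $s(x)\in O_x$, is exactly the expected argument, and your remark about the downward-closure of $\mathcal{M}$ guaranteeing that intersections of contexts are contexts is the one point that needs checking to make the site well-defined; so you have supplied precisely the routine details the paper omits.
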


\subsection{Empirical model}

$R$-empirical models are defined with a semiring $R$, such as the Boolean semiring $\mathbb{B}$, the reals $\mathbb{R}$, or the probability semiring $\mathbb{R}^{+}$. The choice of an $R$ defines a way to probe the model.

To define $R$-empirical models, we use another functor $\mathcal{D}_{R}:\textbf{Set}\to\textbf{Set}::O^{U}\mapsto\left\{\mu^{O^{U}}_{R}\right\}$, taking a set of local events to the set of $R$-distributions defined on it $\mu^{O^{U}}_{R}:\mathbb{P}\left(O^{U}\right)\to R$ that satisfies $\mu^{O^{U}}_{R}(O^{U})=1_{R}$, in analogy with probabilistic distributions. We will denote by $\mu_{R}::U\in\mathcal{U}\mapsto\mu_{R}^{O^{U}}$ a set of $R$-distribution defined in each element of $\mathcal{U}$, and call it a state. In the morphisms, $\mathcal{D}_{R}::\rho'(j,kj)\mapsto \rho''(j,kj)$, with $\rho''(j,kj)::\mu^{O^{j}}_{R}\mapsto\mu^{O^{j}|_{kj}}_{R}=\mu^{O^{j}}_{R}|_{kj}$ the marginalization of the $R$-distribution $j$ on the intersection $kj$. 

\begin{definition}
The tuple $(X,\mathcal{U},\mathcal{E},\mu_{R})=e_{R}$ is called an $R$-empirical model over the measurement scenario $\left<X,\mathcal{U},(O_{x})_{x\in X}\right>=\left(\left<X,\mathcal{U}\right>,\mathcal{E}\right)$ given by the state $\mu_{R}$, defining a set of local sections $\left\{\mu_{R}^{O^{U}}\in\mathcal{D}_{R}\mathcal{E}(U);U\in\mathcal{U}\right\}$.
\end{definition}

\subsection{No-disturbance}

The no-disturbance condition is a usual condition imposed in an $R$-empirical model, sometimes implicitly. It says that $\mu^{O^{j}}_{R}|_{kj}=\mu^{O^{k}}_{R}|_{kj}$ for all $k$ and $j$, which means there is local agreement between contexts. This condition is equivalent to the existence of a compatible family to $\mathcal{D}_{R}\mathcal{E}$, but it doesn't imply in $\mathcal{D}_{R}\mathcal{E}$ to be a sheaf. Since we can only have access to contexts, it is possible to define the functor $\mathcal{D}_{R}\mathcal{E}$ through a state that can't be extended to a distribution in the global events.

No-disturbance is equivalent to the notion of parameter-independence, as explained in Ref. \cite{barbosa2019continuousvariable}, a property that, if violated, means the existence of non-trivial data between contexts. As stated in Ref. \cite{Dzhafarov_2018}, where disturbance is called inconsistent connectedness: ``Intuitively, inconsistent connectedness is a manifestation of direct causal action of experimental set-up upon the variables measured in it''. We will work with non-disturbing models.

\subsection{Contextuality}

Contextuality is the impossibility of describing a given $R$-empirical model in classical terms, but one must first define which classical notion to use. We will call it $R$-contextuality to make explicit the chosen semiring. First, we know that any distribution can be described as the marginalization of another one,
\begin{equation}
    \mu_{R}^{O^{U}}(A)=\sum_{\lambda\in\Lambda}k^{O^{U}}\left(\lambda,A\right),
\end{equation}
for all $A\in\mathbb{P}(O^{U})$, where $k^{O^{U}}:\Lambda\times\mathbb{P}(O^{U})\to R$ is an $R$-distribution that satisfies $\sum_{\lambda\in\Lambda}k^{O^{U}}\left(\lambda,O^{U}\right)=1_{R}$. In the literature of contextuality and nonlocality, $\Lambda$ is called the set of hidden variables, which is statistically taken into account but is empirically inaccessible. 

To impose a classical behavior, the hidden variables must be independent of the contexts, a property called lambda-independence\footnote{Lambda-independence is related to the concept of free choice in nonlocality \cite{Cavalcanti_2018,Abramsky_2014}. It can be understood as a dependence of the hidden variables, sometimes called ontic variables, on the contexts. Such dependence can store contextuality, as free choice can be understood as storing nonlocality \cite{Blasiak_2021}. For more details on the classification of hidden variables in the subject of nonlocality, see Ref. \cite{Brandenburger_2008}.}. To reflect such behavior of independence, our model must show independence between measurements, in other words, be factorizable. Such independence allows us to write
\begin{equation}
    \mu_{R}^{O^{U}}(A)=\sum_{\Lambda}p\left(\lambda,A\right)\prod_{x\in U}\mu_{R}^{O^{x}}(\rho'(U,x)(A)),
\end{equation}
with the assistance of the set of hidden variables $\Lambda$ being statistically taken into account by a distribution $p:\Lambda\times\mathbb{P}(O^{U})\to R$. Summing it up with lambda-independence implies
\begin{equation}
    \mu_{R}^{O^{U}}(A)=\sum_{\Lambda}p\left(\lambda\right)\prod_{x\in U}\mu_{R}^{O^{x}}(\rho'(U,x)(A)),
\label{Contextualidade}
\end{equation}
with $p(\Lambda)=1_{R}$, closing the representation of a $R$-empirical model as a classical system.

\begin{definition}
An $R$-empirical model is said to be $R$-noncontextual if there is an $R$-distribution $p$ and a set of hidden variables $\Lambda$ such that equation \ref{Contextualidade} holds for all $U\in\mathcal{U}$.
\end{definition}

Another property we can impose is called outcome-determinism, which is the property of logically distinguish between outcomes.

\begin{definition}
Outcome-determinism in an $R$-distribution of an empirical model is defined as for all $\lambda\in\Lambda$ there is an outcome $o\in O^{U}$ such that $k^{O^{U}}(\lambda,A)=\delta_{o}(A)$. Equivalently, an empirical model is outcome-deterministic if it satisfies $\prod_{x\in U}\mu_{R}^{O^{x}}(\rho'(U,x)(A))\in\{0,1\}$. 
\end{definition}

In combination with no-disturbance, we get the following result \cite{Abramsky2011,Abramsky_2017}.

\begin{proposition}
A non-disturbing $R$-empirical model that satisfies the outcome-determinism condition has as its hidden variables exactly its global events.
\end{proposition}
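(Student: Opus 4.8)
The plan is to prove the two directions that together justify the word ``exactly'': first, that in any non-disturbing outcome-deterministic representation the hidden variables can be replaced by the global events $\mathcal{E}(X)$ without changing the model, and second, that any $R$-distribution on $\mathcal{E}(X)$ conversely defines such a model, so that no other choice of $\Lambda$ is forced. The engine of the argument is the sheaf property of $\mathcal{E}$, together with the fact (from the No-disturbance subsection) that non-disturbance is precisely what lets the local data be organised into a compatible family.

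First I would unpack outcome-determinism in the representation \ref{Contextualidade}. By the Outcome-determinism definition, for each $\lambda\in\Lambda$ and each context $U\in\mathcal{U}$ the factor $\prod_{x\in U}\mu_R^{O^x}(\rho'(U,x)(\cdot))$ is $\{0,1\}$-valued, hence equals $\delta_{o^{U}_{\lambda}}$ for a unique outcome $o^{U}_{\lambda}\in O^{U}=\mathcal{E}(U)$. Because this product is evaluated site by site and the single-measurement factors do not refer to the ambient context, the value of $o^{U}_{\lambda}$ at a given measurement $x$ depends only on $x$ and $\lambda$; together with non-disturbance this makes the family $\{o^{U}_{\lambda}\in\mathcal{E}(U)\}_{U\in\mathcal{U}}$ a compatible family for the sheaf of events. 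Invoking the proposition that $\mathcal{E}$ is a sheaf, Gluing produces for each $\lambda$ a global section $g_\lambda\in\mathcal{E}(X)$ with $g_\lambda|_U=o^{U}_{\lambda}$ for all $U$, and Locality makes it unique; this defines a map $\lambda\mapsto g_\lambda$ from $\Lambda$ to $\mathcal{E}(X)$. I would then push the distribution $p$ forward along this map, setting $\tilde p(g)=\sum_{\lambda:\,g_\lambda=g}p(\lambda)$, which is again an $R$-distribution since everything is finite and $\tilde p(\mathcal{E}(X))=p(\Lambda)=1_R$. Substituting back gives, for every $U$ and $A$,
\begin{equation}
\mu_R^{O^U}(A)=\sum_{\lambda\in\Lambda}p(\lambda)\,\delta_{o^{U}_{\lambda}}(A)=\sum_{g\in\mathcal{E}(X)}\tilde p(g)\,\delta_{g|_{U}}(A),
\end{equation}
which is exactly \ref{Contextualidade} with hidden-variable set $\Lambda'=\mathcal{E}(X)$ and the site-wise factors $\delta_{g|_{x}}$ obtained by restricting $g$ to the single-measurement contexts. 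For the converse, any $R$-distribution $q$ on $\mathcal{E}(X)$ defines through restriction to each context a state that manifestly satisfies \ref{Contextualidade}, outcome-determinism, and (being an average of deterministic global assignments) no-disturbance; hence the global events are not only sufficient but the canonical choice of hidden variables.

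The main obstacle I anticipate is the compatible-family step: one must argue carefully that the deterministic local assignments $o^{U}_{\lambda}$ genuinely agree on overlaps $U\cap U'$, which is where both the factorized, lambda-independent form of \ref{Contextualidade} and the no-disturbance condition are used before the sheaf structure of $\mathcal{E}$ can be brought to bear. The remaining steps — the well-definedness of the pushforward $\tilde p$ over the semiring $R$, the normalisation bookkeeping, and the converse direction — are routine given the standing finiteness assumptions on $X$, $\mathcal{U}$, and the outcome sets.
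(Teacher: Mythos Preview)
The paper does not actually supply a proof of this proposition: it is stated as a result imported from the literature, with citations to Abramsky--Brandenburger and to Abramsky--Barbosa--Mansfield, and is then immediately used to motivate the contextual fraction and the Fine--Abramsky--Brandenburger theorem. So there is no in-paper argument to compare against.

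That said, your proposal is the standard argument one finds in those references and is essentially correct. The core move---using outcome-determinism to turn each $\lambda$ into a family of deterministic local sections, observing that the factorised, context-independent form of \eqref{Contextualidade} forces those sections to agree on overlaps, and then invoking the sheaf property of $\mathcal{E}$ to glue to a unique global event $g_\lambda\in\mathcal{E}(X)$---is exactly right, and the pushforward of $p$ along $\lambda\mapsto g_\lambda$ is the canonical way to exhibit $\mathcal{E}(X)$ as a sufficient hidden-variable space. Your identification of the compatible-family step as the only place where care is needed is accurate; the rest is bookkeeping under the paper's standing finiteness assumptions.
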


This result allows us to develop a measure of contextuality called 'contextual fraction' through the use of linear programming \cite{Abramsky_2017,barbosa2019continuousvariable}, once we know the set $\Lambda$. With it one can also prove the Fine–Abramsky–Brandenburger Theorem \cite{Abramsky2011}, where $R$-contextuality can be understood as the non-extendability of a local section to a global section of $\mathcal{D}_{R}\mathcal{E}$, or in other words, as the nonexistence of a global $R$-distribution with marginalization to a context $U\in\mathcal{U}$. 

\begin{theorem}[Fine-Abramsky-Brandenburger]
For an empirical model satisfying no-disturbance and outcome-determinism, the following are equivalent:
\begin{itemize}
    \item to be described by deterministic hidden variables described by equation \ref{Contextualidade};
    \item all local sections extending to global sections;
    \item a distribution $\mu^{O^X}_R$ that marginalizes to $\mu^{O^{U}}_R$.
\end{itemize}
\label{FAB}
\end{theorem}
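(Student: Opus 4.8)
The plan is to establish the two equivalences $(1)\Leftrightarrow(3)$ and $(2)\Leftrightarrow(3)$, which together give the three-way equivalence. The conceptual heavy lifting is already done by the preceding Proposition (a non-disturbing outcome-deterministic $R$-empirical model has its hidden variables exactly its global events): it lets us assume without loss of generality that the hidden-variable set is $\Lambda=\mathcal{E}(X)=O^X$, with the kernels being the evaluation deltas $k^{O^{U}}(\lambda,A)=\delta_{\lambda|_U}(A)$, where $\lambda|_U=\rho'(X,U)(\lambda)$ is the restriction of the global event $\lambda$ to the context $U$. It is worth flagging at the outset that no step below uses subtraction, so the whole argument goes through over an arbitrary semiring $R$; in particular over $\mathbb{B}$, which is the possibilistic case relevant to the examples of Section \ref{Multi-agent scenarios}.

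For $(1)\Rightarrow(3)$ I would start from equation \ref{Contextualidade}. Outcome-determinism makes each factor $\prod_{x\in U}\mu_{R}^{O^{x}}(\rho'(U,x)(A))$ an evaluation delta, and no-disturbance makes the single-measurement marginals $\mu^{O^{x}}_R$ context-independent, so each $\lambda$ determines a well-defined assignment $x\mapsto s^{x}_{\lambda}$, i.e.\ an element $s_{\lambda}\in O^{X}=\mathcal{E}(X)$. Then define $\mu^{O^{X}}_R\in\mathcal{D}_R\mathcal{E}(X)$ by $\mu^{O^{X}}_R(B)=\sum_{\lambda:\,s_{\lambda}\in B}p(\lambda)$ for $B\in\mathbb{P}(O^{X})$; this is an $R$-distribution because $p(\Lambda)=1_R$. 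Marginalising to a context $U$ and unwinding the definitions of $\rho'$ and $\rho''$ yields $\mu^{O^{X}}_R|_U(A)=\sum_{\lambda}p(\lambda)\,\delta_{s_{\lambda}|_U}(A)=\mu^{O^{U}}_R(A)$, which is exactly statement $(3)$. The converse $(3)\Rightarrow(1)$ is the reverse construction: given a global $\mu^{O^{X}}_R$, set $\Lambda=O^{X}$, let $p$ be $\mu^{O^{X}}_R$ regarded as an $R$-distribution on the finite set of global events, and take $k^{O^{U}}(\lambda,A)=\delta_{\lambda|_U}(A)$; this representation is lambda-independent and outcome-deterministic and reproduces the state $\mu_R$.

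For $(3)\Leftrightarrow(2)$, observe that a global section of the presheaf $\mathcal{D}_R\mathcal{E}$ is by definition an element of $\mathcal{D}_R\mathcal{E}(X)=\mathcal{D}_R(O^{X})$, i.e.\ precisely an $R$-distribution $\mu^{O^{X}}_R$ on global events; and, since no-disturbance says the family $\{\mu^{O^{U}}_R\}_{U\in\mathcal{U}}$ is compatible for $\mathcal{D}_R\mathcal{E}$, the assertion ``the local sections extend to a global section'' is the assertion that this family arises as the restrictions $\{\mu^{O^{X}}_R|_U\}_{U\in\mathcal{U}}$ of such a global section — which is $(3)$. Here one only needs the remark that restricting the covering to maximal contexts loses no information, since every context is obtained from a maximal one by a further $\rho'$, so compatibility on $\mathcal{U}$ propagates downward. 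The main obstacle is thus not a single deep step but (a) the careful bookkeeping of the two layers of restriction maps, $\rho'$ on outcome sets and $\rho''$ on distributions, and (b) the honest verification in $(1)\Rightarrow(3)$ that the deterministic-hidden-variable data really does assemble into consistent \emph{global} events — which is exactly where no-disturbance (equivalently, that $\mathcal{D}_R\mathcal{E}$ admits a compatible family) is indispensable, and where the preceding Proposition identifying hidden variables with global events is invoked.
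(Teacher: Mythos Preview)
The paper does not actually prove this theorem: it is stated in Appendix~\ref{Sheaf approach} as a known result, with the proof delegated to the citation \cite{Abramsky2011} (see the sentence immediately preceding the theorem, ``With it one can also prove the Fine--Abramsky--Brandenburger Theorem \cite{Abramsky2011}\ldots''). There is therefore no in-paper proof to compare your proposal against.

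That said, your sketch is essentially the standard Abramsky--Brandenburger argument and is sound in outline. One point worth tightening is your reading of item~(2): as phrased in the paper (``all local sections extending to global sections''), and in light of the later discussion of the possibilistic hierarchy, it is most naturally read not as the single compatible family $\{\mu^{O^{U}}_R\}_U$ extending --- which would make (2) and (3) identical by definition --- but as every local section in the support of the model admitting an extension to a global event in the support of a global distribution. Your $(2)\Leftrightarrow(3)$ paragraph collapses this distinction; the honest argument needs the observation that a global $R$-distribution whose marginals are the $\mu^{O^{U}}_R$ must have support covering every local section in the support of each $\mu^{O^{U}}_R$, and conversely that such coverage (together with outcome-determinism) lets one assemble the required global distribution. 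This is where the finiteness assumptions and the semiring structure actually do some work.
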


We can graphically describe noncontextual behavior as the commutation of the diagram
\begin{equation}
    \begin{tikzcd}
\mathcal{E}(\mathcal{U}) \arrow{rr}{\mu_{R}} \arrow[hookrightarrow]{dr}{i'} & & R \\
& \mathcal{E}(X) \arrow{ur}{\nu_{R}} &
\end{tikzcd}
\end{equation}
The global events define a global $R$-distribution $\mu_{R}^{O^{X}}$, and the commutation implies the realization of the $R$-empirical model by it. Here, $i'$ is the inclusion of local events in global events. As explored in Ref. \cite{https://doi.org/10.48550/arxiv.2202.08719}, the failure of commutativity can be seen in two independent ways: the first due to $i'$, which is linked to anti-realist interpretations, and the second due to $\nu_{R}$, linked to realist interpretations. With the previous results, the Sheaf Approach chooses to attribute the failure to $\nu_R$ by imposing the sheaf condition on events.

\subsection{A hierarchy of contextuality}

The Sheaf Approach generally restricts itself to models satisfying outcome-determinism and non-perturbation to be able to utilize the strong prior results. Under such conditions, we will cite some facts that follow from the choice of a semi-ring $R$.

Every noncontextual non-disturbing $\mathbb{R}$-empirical model is equivalent to a non-disturbing $[0,1]$-empirical model. If the $\mathbb{R}$-empirical model allows a non-negative description, then it is $[0,1]$-noncontextual. $[0,1]$-contextuality is referred to as probabilistic contextuality. An example is the Bell-Clauser-Horne-Shimony-Holt model \cite{bell_aspect_2004,PhysRevLett.23.880}, which exhibits contextuality only when dealing with probabilities since it does not feature any local event that does not allow an extension to a global event. In other words, contextuality is only observable through probability distributions and not logically. 

$\mathbb{B}$-contextuality is called logical or possibilistic. A $\mathbb{B}$-contextual empirical model has at least one local event that does not allow extension to a global event, and therefore the model allows verification of non-classicality through logical methods. An example of this case is the Hardy model \cite{Phys,Rev,Cabello_2013}. If the $\mathbb{B}$-contextual empirical model does not exhibit any local event that can be extended to a global event, then it is said to exhibit strong contextuality, the highest level of contextuality a model can exhibit. Examples of this case include the Popescu-Rohrlich boxes model \cite{Popescu1994}, the Klyachko-Can-Binicioğlu-Shumovsky pentagram model \cite{Lett}, and the Greenberger-Horne-Zeilinger model \cite{greenberger2007going}. Every $[0,1]$-empirical model can be decomposed into a noncontextual part and a strong contextual part.

A $\mathbb{B}$-empirical model uniquely defines a $[0,1]$-model induced by the standard map $[0,1]\to \mathbb{B}$ defined by $0\mapsto 0$ and $(0,1]\mapsto 1_{\mathbb{B}}$. This determines that logical contextuality is stronger than probabilistic contextuality, as there exist models that do not exhibit logical contextuality even when they are contextual. On the other hand, strong contextuality is a special case of logical contextuality, since no local event admits an extension. We then have a hierarchy relating the contextuality of non-perturbative models:
\begin{widetext}
\begin{equation}
\text{strongly contextual} > \text{logical contextual} > \text{probabilistic contextual} > \text{noncontextual}
\end{equation}
\end{widetext}

\section{Proof of Proposition \ref{Prop}}
\label{proof}

For entirety, we will repeat the statement of the proposition.

\begin{proposition}
The following statements are valid:
\begin{itemize}
    \item Axiom $\mathbf{T}$ turns trust relations vacuous.
    \item The trust relation $\stackrqarrow{D}{}$, along with the condition that $(\phi\to D_I\phi)\forall\phi$, induces a fundamental truth.
\end{itemize}
 
\end{proposition}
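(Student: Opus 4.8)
The plan is to handle the two bullets separately, relying only on the knowledge hierarchy recorded in \eqref{individual implies distributed}--\eqref{distributed implies fundamental}, the normality of every knowledge operator (rules $\mathbf{N}$ and $\mathbf{K}$, which give monotonicity $\psi\to\chi\ \Rightarrow\ \Box\psi\to\Box\chi$ for each of $K_i,E_G,D_G$), and the topological reading of the operators in $\mathbf{S4}$. For the first bullet I would assume $\mathbf{T}$ and show that each of the three trust conditions becomes a theorem, hence informationally empty. For $j\rightsquigarrow i$: from $K_j\phi\to\phi$ (Axiom $\mathbf{T}$ for $j$), Necessitation for $i$ gives $K_i(K_j\phi\to\phi)$, and Distribution $\mathbf{K}$ yields $K_iK_j\phi\to K_i\phi$, which is precisely the defining formula. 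For $G\stackrqarrow{D}{} G'$ and $G\stackrqarrow{E}{} G'$ I would first note that $\mathbf{T}$ forces $D_G\phi\to\phi$ (this is \eqref{distributed implies fundamental} localised to $G$) and $E_G\phi\to\phi$ (since $E_G\phi\to K_i\phi\to\phi$ for any $i\in G$), and then apply monotonicity of $E_{G'}$ to obtain $E_{G'}D_G\phi\to E_{G'}\phi$ and $E_{G'}E_G\phi\to E_{G'}\phi$ for all $\phi$. Since these hold with no hypothesis on the relations themselves, $\rightsquigarrow$, $\stackrqarrow{D}{}$ and $\stackrqarrow{E}{}$ add nothing once $\mathbf{T}$ is present, i.e.\ they are vacuous.

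For the second bullet I would \emph{not} assume $\mathbf{T}$ for the knowledge operators and instead derive a fundamental truth from $\stackrqarrow{D}{}$ together with $(\phi\to D_I\phi)\forall\phi$. The target is to show that $D_I$ can legitimately play the role of an absolute truth predicate, i.e.\ to recover $D_I\phi\leftrightarrow\phi$, from which $\mathbf{T}$ follows for every agent via $K_i\phi\to D_I\phi\to\phi$ using \eqref{individual implies distributed}. One direction, $\phi\to D_I\phi$, is the hypothesis; for the converse I would pass to topological semantics. The hypothesis reads $\nu(\phi)\subseteq\nu(D_I\phi)=\interior{\nu(\phi)}$ in $\tau_{D_I}$, so every proposition is $\tau_{D_I}$-open; since $\tau_{D_I}$ is already the finest topology the set of agents can generate and (by richness of the valuation in the finite setting) the subsets $\nu(\phi)$ exhaust what is needed, this collapses $\tau_{D_I}$ to the discrete topology, whence $\interior{\nu(\phi)}=\nu(\phi)$ and $D_I\phi\to\phi$. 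Finally I would invoke the trust relation $\stackrqarrow{D}{}$ to argue that this collapsed $D_I$ is not a mere formal artefact: trust (with trustworthiness, Definition~\ref{trustworthy}) is exactly what makes the content of $D_I$ transferable to and reconstructible by the agents, so the recovered $\mathbf{T}$ is a \emph{fundamental} truth in the relational sense rather than an ungrounded posit — mirroring the first bullet, where $\mathbf{T}$ made trust idle, here trust plus $\phi\to D_I\phi$ makes $\mathbf{T}$ re-emerge.

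The main obstacle is the converse implication $D_I\phi\to\phi$ in the second bullet: stripping a knowledge operator down to a bare $\phi$ is precisely the job of $\mathbf{T}$, so the step cannot be purely syntactic and must exploit the topological semantics — the collapse of $\tau_{D_I}$ to the discrete topology, equivalently the restoration of $w\in U^w$ — together with a hypothesis that the valuation is rich enough to witness this collapse. Pinning down that richness hypothesis and checking it is automatic under the finiteness restriction we work with is the delicate technical point, while identifying the conceptual (rather than calculational) role of $\stackrqarrow{D}{}$ — upgrading the formal equivalence $D_I\phi\leftrightarrow\phi$ to a genuinely fundamental one — is where the statement's content really lies.
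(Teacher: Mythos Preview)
Your treatment of the first bullet is essentially the paper's: Axiom $\mathbf{T}$ gives $K_j\phi\to\phi$, $D_G\phi\to\phi$ and $E_G\phi\to\phi$, and then monotonicity of the outer operator ($K_i$ or $E_{G'}$) turns each trust schema into a theorem, so the relations impose nothing.

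For the second bullet you take a correct but unnecessarily circuitous route. The paper does not suspend $\mathbf{T}$ here; it is working in $\mathbf{S4}$ throughout, so \eqref{distributed implies fundamental}, i.e.\ $D_I\phi\to\phi$, is already available. Combined with the hypothesis $\phi\to D_I\phi$ one obtains $\phi\leftrightarrow D_I\phi$ in a single step, and the remaining content is the conceptual observation that $\tau_{D_I}$ is the finest topology the agents can generate (so nothing lies beyond $D_I$) and that every agent trusts $D_I$ via $K_iD_I\phi\to K_i\phi$. Your detour through collapsing $\tau_{D_I}$ to the discrete topology is not needed: in topological semantics $\interior{\nu(\phi)}\subseteq\nu(\phi)$ is automatic, which is already $D_I\phi\to\phi$; indeed your own chain ``$\nu(\phi)\subseteq\interior{\nu(\phi)}$, hence $\nu(\phi)$ is open, hence $\interior{\nu(\phi)}=\nu(\phi)$'' delivers the biconditional before discreteness is ever mentioned. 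Consequently the ``richness of the valuation'' hypothesis you single out as the delicate technical point never enters --- it would only be required to force discreteness, which is not the target --- and the obstacle you anticipate (stripping a modal operator without $\mathbf{T}$) does not arise, because by passing to topological semantics you are in $\mathbf{S4}$ and $\mathbf{T}$ comes for free.
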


\begin{proof}
    According to Axiom $\mathbf{T}$, we have $(K_j\phi\to\phi)\forall\phi$, which implies $(K_i(K_j\phi)\to K_i\phi)\forall\phi$. In other words, the existence of fundamental truth makes the knowledge of every agent reliable. This allows us to generalize to sets of agents and trust relations given by $D_G$ and $E_G$ by means of $(D_I\phi\to\phi)\forall\phi$ and $(E_I\phi\to\phi)\forall\phi$, which follow from their definitions. Hence, we have $(K_i(D_I\phi)\to K_i\phi)\forall\phi$ and $(K_i(E_I\phi)\to K_i\phi)\forall\phi$, meaning they are also rendered reliable. Therefore, we obtain vacuity.

    Suppose a set $I$ of agents and their trust relation $\stackrqarrow{D}{}$. The construction of $D_I$, even more explicitly in its topology $\tau_{D_I}$, depends on the trust relation $\stackrqarrow{D}{}$ between sets of agents. By imposing that $(\phi\to D_I\phi)\forall\phi$, we are saying that $(\phi\leftrightarrow D_I\phi)\forall\phi$, that the distributed knowledge is equivalent to a fundamental truth. As it has the finest topology, there is no other knowledge that the agents can access beyond what is captured by $D_I$. Furthermore, $(K_i D_I\phi\to K_i\phi)\forall\phi$, meaning all agents trust the distributed knowledge.
\end{proof}

\bibliographystyle{apsrev4-2}
\bibliography{apssamp2.bib}

\end{document}